\newif\iffinal
\newif\ifinline
\newif\ifwithextensions
  \definecolor{lightblue}{rgb}{.8,.95,1}
\definecolor{mygreen}{RGB}{0,150,0}
\newcommand{\mailto}[1]{\href{mailto:#1}{\nolinkurl{#1}}}
\renewcommand{\vec}[1]{\mathbf{#1}}
\newcommand{\mT}{\mathcal{T}}
\newcommand{\mTz}{\mathcal{T}^{z}}
\newcommand{\mR}{\mathcal{R}}
\newcommand{\mI}{\mathcal{I}}
\newcommand{\mD}{\mathcal{D}}
\newcommand{\mTH}{\mathcal{TH}}
\newcommand{\resets}{\tau}
\newcommand{\Shared}{\Gamma}
\newcommand{\vP}{\vec{P}}
\newcommand{\vp}{\vec{p}}
\newcommand{\from}{\textit{from}}
\renewcommand{\to}{\textit{to}}
\newcommand{\vup}{\vec{uv}}
\newcommand{\Rules}{\mR}
\newcommand{\cs}{\mathsf{CS}}
\newcommand{\csa}{\mathsf{CS(A)}}
\newcommand{\vg}{\vec{g}}
\newcommand{\vk}{\vec{k}}
\newcommand{\Confs}{\Sigma}
\newcommand{\conf}{\sigma}
\newcommand{\Trans}{\mT}
\newcommand{\absTA}{\overline{\text{TA}}}
\newcommand{\absA}{\overline{A}}
\newcommand{\absShared}{\overline{\Shared}}
\newcommand{\absRules}{\overline{\Rules}}
\newcommand{\bx}{\overline{x}}
\newcommand{\ar}{\overline{r}}
\newcommand{\acs}{\mathsf{ACS}}
\newcommand{\acsa}{\mathsf{ACS(\absA)}}
\newcommand{\avg}{\overline{\vg}}
\newcommand{\absConfs}{\overline{\Confs}}
\newcommand{\absconf}{\overline{\conf}}
\newcommand{\absTrans}{\overline{\Trans}}
\newcommand{\abcs}{(\absConfs, \absConfs_0, \absTrans)}
\newcommand{\bvarphi}{\ensuremath{\overline{\varphi}}}
\newcommand{\bphi}{\ensuremath{\overline{\varphi}}}
\newcommand{\zcs}{\mathsf{ZCS}}
\newcommand{\zcsa}{\mathsf{ZCS(\absA)}}
\newcommand{\zConfs}{\Confs^z}
\newcommand{\zconf}{{\conf^{z}}}
\newcommand{\zpi}{\mathsf{\pi^z}}
\newcommand{\vkz}{\vec{k}^{z}}
\newcommand{\vu}{\vec{u}}
\newcommand{\bpi}{\mathsf{\bar{\pi}}}
\definecolor{darkgreen}{rgb}{0,0.5,0}
\definecolor{darkblue}{rgb}{0,0,.5}
\definecolor{mygray}{gray}{.3}
\newcommand{\state}{q}
\newcommand{\cstate}{s}
\newcommand{\cstateset}{S}
\newcommand{\stateset}{\expandafter\MakeUppercase\expandafter{\state}}
\newcommand{\Stateset}{\expandafter\MakeUppercase\expandafter{\State}}
\newcommand{\card}[1]{\left| {#1} \right|}
\newcommand{\Nat}{\ensuremath{\mathbb{N}}}
\newcommand{\Int}{\ensuremath{\mathbb{Z}}}
\newcommand{\Bool}{\ensuremath{\mathbb{B}}}
\newcommand{\smartpar}[1]{\medskip \noindent {\bf #1}}
\renewcommand{\iff}{\Leftrightarrow}
\renewcommand{\todo}[1]{}
\newcommand{\gray}[1]{}
\newcommand{\gray}[1]{{\color{black!50} #1}}
\newcommand{\transition}[3]{#1 \xrightarrow{#2} #3}
\tikzstyle{pta}=[auto, ->, >=stealth']
\tikzstyle{every node}=[initial text=]
\tikzstyle{location}=[rectangle, rounded corners, minimum size=12pt, draw=black, fill=blue!10, inner sep=3.5pt]
\tikzstyle{methodBox}=[rectangle, minimum size=12pt, draw=black, fill=green!10, inner sep=3.5pt]
\tikzstyle{invariant}=[draw=black, dotted, fill=yellow, inner sep=1pt, node distance=0] %
\tikzstyle{locguard}=[fill=cyan!30, inner sep=1pt, node distance=0] %
\tikzstyle{final}=[double, fill=blue!50]
\newcommand{\li}{\begin{itemize}}
\newcommand{\il}{\end{itemize}}
	 	\tikzstyle{proc} = [rectangle,draw=black,fill=green!20,thick,inner sep=10pt]
	  	\tikzstyle{state} = [circle,draw=black,thick,inner sep=3pt]
	  	\tikzstyle{noproc} = [circle]
	  	\tikzstyle{lbl} = [rectangle,node distance=2cm]
  		\tikzstyle{pre} = [ <-,shorten <=2pt,shorten >=2pt, >=stealth', semithick]
	  	\tikzstyle{post} = [ ->,shorten <=2pt,shorten >=2pt, >=stealth', semithick]
  	\tikzstyle{pstate} = [circle, fill=black,thick, inner sep=2pt, minimum size=2mm]
    \tikzstyle{hiddenstate} = [circle]
  	\tikzstyle{trans-notsched} = [ ->,shorten <=2pt,shorten >=2pt, >=stealth', semithick]
  	\tikzstyle{trans-sched} = [ ->,shorten <=2pt,shorten >=2pt, >=stealth', very thick] 
\newcommand{\ord}{\lesssim}
\newcommand{\zord}{\ord^z}
\algnewcommand{\LineComment}[1]{\State \(\triangleright\) #1}
\newcommand{\nuparrow}{\hspace{0.1cm}\uparrow \hspace{-0.1cm}}
\Crefname{section}{Sect.}{Sects.}
\begin{document}

	\title{Parameterized Verification of\\ Round-based Distributed Algorithms\\ via Extended Threshold Automata}
	\titlerunning{Parameterized Verification via Extended Threshold Automata}

	\authorrunning{T. Baumeister, P. Eichler, S. Jacobs, M. Sakr, and M. V{\"o}lp}
	
	\author{%
		Tom	Baumeister\inst{1}\href{https://orcid.org/0009-0009-8539-6246}{\includegraphics[scale=0.03]{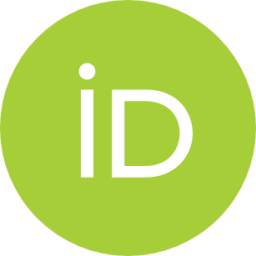}}\and 
		Paul Eichler\inst{1}\href{https://orcid.org/0009-0008-6117-318X}{\includegraphics[scale=0.03]{orcid.png}}\and
		Swen Jacobs\inst{1}\href{https://orcid.org/0000-0002-9051-4050}{\includegraphics[scale=0.03]{orcid.png}} \and 
		Mouhammad Sakr\inst{2}\href{https://orcid.org/0000-0002-5160-0327}{\includegraphics[scale=0.03]{orcid.png}} \and 
		Marcus Völp\inst{2}\href{https://orcid.org/0000-0002-8020-4446}{\includegraphics[scale=0.03]{orcid.png}} 
	}%
	\institute{
		CISPA Helmholtz Center for Information Security, Germany\\
		\and
		SnT, Luxembourg University}

\maketitle              
 \begin{abstract} 
   Threshold automata are a computational model that has proven to be versatile in modeling threshold-based distributed algorithms and
   enabling their completely automatic parameterized verification.
We present novel techniques for the verification of threshold automata, based on well-structured transition systems, that allow us to extend the expressiveness of both the computational model and the specifications that can be verified.
In particular, we extend the model to allow decrements and resets of shared variables,
possibly on cycles, and the specifications to general coverability.
While these extensions of the model in general lead to undecidability, our algorithms provide a semi-decision procedure.
We demonstrate the benefit of our extensions by showing that we can model complex round-based algorithms such as the phase king consensus algorithm and the Red Belly Blockchain protocol (published in 2019), and verify them fully automatically for the first time.

\end{abstract}

\section{Introduction}
Due to the increasing prevalence and importance of distributed systems in our society, ensuring reliability and correctness of these systems has become paramount.
Computer-aided verification of distributed protocols and algorithms has been a very active research area in recent years \cite{HawblitzelHKLPR17,RahliGBC15,WilcoxWPTWEA15,McMillanP20,JaberWJKS21}.
To be practical, models of distributed algorithms need to take into account that communication or processes may be faulty, and correctness guarantees should be given based on a \emph{resilience condition} that defines the quantity and quality of faults (e.g., how many processes may crash or even behave arbitrarily~\cite{k17completeness,MaricSB17,Bertrand21}).

Moreover, many distributed systems consist of an arbitrary number of communicating processes, thus requiring \emph{parameterized verification} techniques that consider the number of processes as a parameter of the system, and that can provide correctness guarantees regardless of this parameter.
However, the parameterized verification problem is in general undecidable, even in restricted settings such as identical and anonymous finite-state processes that communicate by passing a binary-valued token in a ring~\cite{Suzuki88}. 

Since undecidability arises so easily in this setting, the research into automatic parameterized verification can in principle be divided into two directions: 
(i) identifying \emph{decidable} classes of systems and properties~\cite{German92,Emerso03,abdulla1996general,EsparzaFM99,EmersonK00,finkel2001well,EmersonK03,KaiserKW10,DelzannoSZ10,SchmitzS13,AminofJKR14,AminofKRSV14,BloemETAL15,AJK16,JS2018VMCAI,Balasubramanian18,JaberJW0S20,CzerwinskiO21,Balasubramanian22}, which are often restricted to rather specialized use cases, and 
(ii) developing \emph{semi-decision procedures}~\cite{Bouajjani00,AbdullaHH16}, which are usually much more versatile but come without a termination guarantee.
In practice however, the line between these two approaches is not so clear.
The line blurs when semi-decision procedures serve as decision procedures for a certain fragment of their possible inputs, and on the other hand, many decidable problems in parameterized verification have a huge complexity~\cite{SchmitzS13,CzerwinskiO21}, and to an engineer it does not make a difference if the verification algorithm is guaranteed to terminate within a thousand years, or whether it comes without a termination guarantee.

{In this paper, we aim for practical verification techniques that cover a large class of fault-tolerant distributed algorithms. 
Our focus is not decidability, but the development of techniques that can in practice verify a wide range of distributed algorithms, with full automation.}
We build on the formal model of threshold automata (TA) and existing techniques to verify them, which we combine with insights from the theory of well-structured transition systems (WSTS)~\cite{finkel2001well} and with abstraction techniques.
This combination allows us to lift some of the restrictions of the existing decidable fragments for TAs, and to verify distributed algorithms that are not supported by the existing techniques.

In particular, existing verification techniques for TAs are usually restricted to reason about shared variables that are monotonically increasing or decreasing over any run of the automaton.
An extension of the model that allows both increasing and decreasing variables has been considered before~\cite{kukovec2018reachability}, but only to show that the problem becomes undecidable, not to develop a verification technique for this case.
In our setting, we can allow increments, decrements and resets of shared variables, while still obtaining a semi-decision procedure, and even a decision procedure in certain cases.
In this paper, we demonstrate how this allows us to reason about round-based algorithms such as the phase king consensus algorithm~\cite{king} and the Red Belly Blockchain algorithm~\cite{CrainNG21}, which both use resets of shared variables at the beginning of a round, without a fixed bound on the number of rounds to be executed.

\smartpar{Contributions.}
In this paper,
\begin{enumerate}[nosep]
\item we present an extension of threshold automata that allows increments, decrements and resets of shared variables (\cref{sec:system-model}),
\item we develop a technique (based on well-structured transition systems) that is a decision procedure for general coverability properties of canonical threshold automata~\cite{kukovec2018reachability}, and is a semi-decision procedure for our
  extension (Sect.~\ref{sec:specs}-\ref{sec:checking-coverability}),
\item we develop an additional abstraction that reduces the search space of this technique and additionally allows us to check another type of specification, called reachability properties, on extended threshold automata (Sect.~\ref{sec:01-abstraction}),
\item we implement our techniques and demonstrate their performance on a number of examples from the literature,
  several of which --- including the \emph{phase king consensus} algorithm and the state-of-the-art Red Belly Blockchain algorithm ---
  cannot be modeled in canonical threshold automata (\cref{sec:experiments}).
\end{enumerate}

\section{System Model} \label{sec:system-model}

In this section, we build on the existing notion of \emph{threshold automata} (TAs)~\cite{k17completeness} and generalize their notion of shared variables.
TAs are 
a model of distributed computation that encodes information exchange between processes into a fixed set of shared variables.
Shared variables in TAs are usually required to be monotonically increasing along an execution to ensure decidability.
We extend the definition such that it permits shared variables to be decreased or reset, which in general introduces undecidability \cite{kukovec2018reachability}.
We then define the semantics of an unbounded number of such TAs running in parallel.
In Section~\ref{sec:abstract-TA}, we introduce the notion of an abstract TA and provide the semantics for this abstraction.

\begin{definition}\label{def:TA}
A \emph{threshold automaton} is a tuple $A=(L, \mI, \Shared, \Pi,\Rules, RC)$ where:
\begin{itemize}
\item $L$ is a finite set of \emph{locations}.%
	\item $ \mI \subseteq L$ is the set of \emph{initial locations}.
	\item $\Shared = \{x_0,\ldots,x_m\}$ is a finite set of \emph{shared variables} over $\Nat_0$.
	\item $\Pi$ is a finite set of \emph{parameter variables} over $\Nat_0$.
	  Usually, %
          $\Pi = \{n,t,f\}$, where $n$ is the total number of processes, $t$ is a bound on the number of tolerated faulty processes
          , and $f$ is the actual number of faulty processes.
	\item $RC$, the \emph{resilience condition}, is a linear integer arithmetic formula over parameter variables.
E.g.: $RC = n > 3t\, \land\, t \geq f$.\\
For a vector $\vp \in \Nat_0^{|\Pi|}$, we write $\vp \models RC$ if $RC$ holds after substituting parameter variables with values according to $\vp$.
Then the set of \emph{admissible} parameters is $\vP_{RC} = \{\vp \in \Nat_0^{|\Pi|} : \vp \models RC\}$.
	
	\item $\Rules$ is a set of rules where a \emph{rule} is a tuple $r=(\from, \to, \varphi,\vup,\resets)$ such that:
	\begin{itemize}
		\item $\from,\to \in L$.
		\item $\vup \in |\Int|^{|\Shared|}$ is an \emph{update vector} for shared variables.
		\item $\varphi$ is a conjunction of lower guards and upper guards.
		A \emph{lower guard} has the form: $a_0 + \sum_{i=1}^{|\Pi|} a_i \cdot p_i \leq x$; An \emph{upper guard} has the form: $ a_0 + \sum_{i=1}^{|\Pi|} a_i \cdot p_i > x$, with $x \in \Shared$, $a_0,\ldots,a_{|\Pi|} \in \mathbb{Q}$, $p_1,\ldots,p_{|\Pi|} \in \Pi$.\\		
The left-hand side of a lower or upper guard is called a \emph{threshold}. 
	\item $\resets \subseteq \Shared$ is the set of shared variables to be reset to 0.
	\end{itemize}
\end{itemize}

\end{definition}

\vspace*{1em}
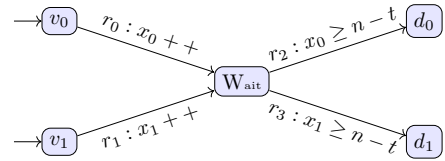
\begin{wrapfigure}[7]{r}{.49\textwidth}
	\vspace{-7em}
	\scalebox{0.8}{
		\begin{tikzpicture}		
			\node [location, initial, initial text=] (v0) at (0,0) {$v_0$};
			\node [location] (w)  at (3, -1) {W\tiny{ait}};
			\node [location, initial, initial text=] (v1) at (0,-2)  {$v_1$};
			\node [location] (d0) at (6,0)  {$d_0$};
			\node [location] (d1)  at (6,-2)  {$d_1$};

			\path[->]
				(v0) edge [] node [sloped, above] {$r_0: x_{0}++$} (w)
				(v1) edge [] node [sloped, below] {$r_1: x_{1}++$} (w)
				(w) edge [] node [sloped, above] {$r_2: x_{0} \geq n-t$} (d0)
				(w) edge [] node [sloped, below] {$r_3: x_{1} \geq n-t$} (d1)
			;
		\end{tikzpicture}
	}

\caption{A threshold automaton for simple voting.}
\label{fig:ta}
\end{wrapfigure}

\noindent
\begin{minipage}{.47\textwidth}
	\begin{example}
		\label{ex:TA}
		Figure~\ref{fig:ta} sketches a threshold automaton with $\mI=\{v_0,v_1\}$, $L = \{v_0,v_1, Wait, d_0,d_1\}$, $\Shared=\{x_0,x_1\}$, $\Pi=\{n,t,f\}$. 
		A process in $v_0$ has a vote of $0$ and a process in $v_1$ has a vote of $1$. 
		If at least $n-t$ processes vote with $0$~(respectively, $1$), the decision will be $0$~($1$), modeled by all processes moving to $d_0$ ($d_1$).
	\end{example}		
\end{minipage}
\vspace*{1em}

\smartpar{Semantics of TA}.
Given a TA $A = (L, \mI, \Shared, \Pi,\Rules, RC)$, let the function $N: \: \vP_{RC} \rightarrow \Nat_0$ determine the number of processes to be modelled (usually, $N(n,t,f) = n-f$).
Then, the concrete semantics of a system composed of $N(\vp)$ threshold automata running in parallel are defined via a \emph{counter system}.

\begin{definition}\label{def:countersystem}
A \emph{counter system}~($\cs$) of a TA $A = (L, \mI, \Shared, \Pi,\Rules, RC)$ is a transition system $\csa = (\Confs, \Confs_0, \Trans)$ where%

\begin{itemize}
	\item $\Confs$ is the set of configurations.
	A \emph{configuration} is a tuple $\conf=(\vk,\vg,\vp)$ where:
	\begin{itemize}
		\item $\vk \in \Nat_0^{|L|}$ is a vector of counter values, where $\vk[i]$ represents the number of processes in location $i$. We refer to locations by their indices in $L$.
		\item $\vg \in \Nat_0^{|\Shared|}$ is a vector of shared variables' values, where $\vg[i]$ is the value of variable $x_i \in \Shared$.
		\item $\vp \in \vP_{RC}$ is an admissible vector of parameter values.
	\end{itemize}
	\item The set $\Confs_0$ contains all \emph{initial configurations}, i.e., configurations that satisfy
	\[ \forall x_i \in \Shared: \:\: \conf.\vg[i] = 0 \text{ and } \sum_{i \in \mI } \conf.\vk[i] = N(\vp) \text{ and } \sum_{i \not\in \mI } \conf.\vk[i] = 0 \]
	\item $\Trans \subseteq \Confs \times \Rules \times \Confs$ is the set of \emph{transitions}, where $(\conf, r, \conf') \in \Trans$ if and only if all of the following conditions hold:
	\begin{itemize}
		\item $\conf'.\vp = \conf.\vp$~(parameter values never change).
		\item $\conf'.\vk[r.\to] = \conf.\vk[r.\to] + 1$~(one process moves to $r.\to$).
		\item $\conf'.\vk[r.\from] = \conf.\vk[r.\from] - 1$~(one process moves out of $r.\from$)
		\item $\conf.\vg \models r.\varphi$ (i.e., $\varphi$ holds after replacing shared variables with values $\conf.\vg$) 
		\item $\conf'.\vg = \conf.\vg + r.\vup$
		\item $\forall x_i \in \resets$ $\conf'.\vg[i] = 0$
	\end{itemize}
	Instead of $(\conf, r, \conf') \in \Trans$ we also write $\transition{\conf}{r}{\conf'}$.
	If $(\conf, r, \conf') \in \Trans$, we say $r$ is \emph{enabled} in $\conf$; otherwise it is \emph{disabled}.
\end{itemize}
\end{definition}

\smartpar{Paths of $\cs$.} A sequence $\conf_0, r_0, \conf_1,\ldots,\conf_{k-1},r_{k-1}, \conf_k$ of alternating configurations and rules is a \emph{path} of a counter system $\csa = (\Confs, \Confs_0, \Trans)$ if and only if $\conf_0 \in \Confs_0$ and $(\conf_{i},r_i,\conf_{i+1}) \in \Trans$  for $0 \leq i < k$.
In this case we also write $\conf_0 \rightarrow^* \conf_k$.
We denote by $Paths(\csa)$ the set of all paths of $\csa$.

\begin{example}
	Let $N(n,t,f) = n-f$, $RC = n > 3t \land t \geq f$, then the following is a valid path of the counter system of the TA in Figure \ref{fig:ta}:\\
	$[(4,0,0,0,0)(0,0)],r_0,[(3,0,1,0,0)(1,0)],r_0,[(2,0,2,0,0)(2,0)],r_0$,\\
	$[(1,0,3,0,0)(3,0)],r_0,[(0,0,4,0,0)(4,0)],r_2,[(0,0,3,1,0)(4,0)]$.
\end{example}
 
\subsection{Abstract Threshold Automata}
\label{sec:abstract-TA}

The shared variables and parameters of a TA have infinite domains.
To facilitate the application of parameterized verification techniques for finite-state processes, we introduce an abstraction of TAs based on parametric interval abstraction~\cite{john2013parameterized}.

\smartpar{Abstract Domain}.
Given a TA $A$, define 
as $\mTH=\{d_0,d_1,\ldots,d_k\}$ the set of thresholds where $d_0=0,d_1=1$ and $\forall i >1$ $d_i$ is a threshold in $A$.
We assume that $\forall i,j$ $d_i < d_j$ if $i < j$.
Note that this is always possible for a fixed $\vp \in \vP_{RC}$.
If different $\vp \in \vP_{RC}$ result in different orders of the $d_i$, then we consider each of the finitely many such orders separately.
Based on this, define the finite set of intervals $\mD=\{I_0,I_1,\ldots,I_k\}$ where $I_i = [d_i,d_{i+1}[$ if $i<k$, and $I_k = [d_k,\infty[$.

\begin{definition}\label{def:abstractTA}
\smartpar{Abstract Threshold Automata.}
Given a threshold automaton $A=(L, \mI, \Shared, \Pi,\Rules, RC)$, we define the \emph{abstract threshold automaton}~(or $\absTA$) $\absA=(L, \mI, \absShared, \Pi,\absRules)$ where:
\begin{itemize}
	\item $A$ and $\absA$ share the components $L,\mI,\Pi$.
	\item Let $\Shared = \{x_0,\ldots,x_m\}$, then $\absShared= \{\bx_0,\ldots,\bx_m\}$, where each $\bx_i$ is over the domain $\mD=\{I_0,I_1,\ldots,I_k\}$.
	\item $\absRules$ is the set of abstract rules. An \emph{abstract rule} is a tuple $\ar=(\from, \to, \bvarphi, \vup,$ $ \resets)$ where $\from, \to, \vup, \resets$ are as before, and the \emph{abstract guard} $\bvarphi$ is a Boolean expression over equalities between shared variables and abstract values.
          
Formally, let $\varphi = \varphi_0 \land \ldots \land \varphi_k$, then $\bvarphi = \bphi_0 \land \ldots \land \bphi_k$ where for $\varphi_i = (d_j \leq x)$, we have $\bphi_i = \bigvee_{c=j}^{k-1} (\bx = [d_c,d_{c+1}[) \lor \bx = [d_k,\infty[$, and for $\varphi_i = (d_j > x)$, we have $\bphi_i = \bigvee_{c=0}^{j-1} (\bx = [d_c,d_{c+1}[)$.

\end{itemize}
\end{definition}

\begin{example}
Consider again the TA in Figure \ref{fig:ta} with $N(n,t,f) = n-f$, $RC = n > 3t \land t \geq f > 1$.
We have $\mTH=\{0,1,t,n-t\}$ and $\mD=\{[0,1[,[1,t[,[t,n-t[, [n-t,\infty[\}$ whose order is induced by $RC$.
Moreover, we have $\ar_0 = r_0,\ar_1 = r_1$~(due to the absence of a guard), $\ar_2.\bvarphi = (\bx_0 = [n-t,\infty[)$, $\ar_3.\bvarphi = (\bx_1 = [n-t,\infty[)$.
\end{example}

To keep the presentation simple, in our definition all shared variables have the same abstract domain. 
The abstraction can be improved by considering different abstract domains for different variables:
for a given shared variable $x$ we can let $\mTH_x=\{d_0,d_1,\ldots,d_l\}$ where $d_0=0,d_1=1$ and $\forall i >1$ there is a guard $d_i * x$ with $* \in \{\geq, <\}$, to obtain a corresponding abstract domain $\mD_x$ for $x$.

\subsubsection{Semantics of $\absTA$.}
We first over-approximate the semantics of a system composed of an arbitrary number of $\absTA$s by an abstract counter system. 
We later show how to detect whether a behavior of the abstract counter system corresponds to a concrete behavior of a counter system.

\begin{definition}\label{def:abstractCS}
An \emph{abstract counter system}~($\acs$) of $\absA=(L, \mI, \absShared, \Pi,\absRules)$ is a transition system $\acsa = \abcs$ where:
\begin{itemize}
	\item $\absConfs$ is the set of abstract configurations.
	A \emph{configuration} of $\acsa$ is a tuple $\absconf=(\vk,\avg)$ where:
	\begin{itemize}
		\item $\vk \in \Nat_0^{|L|}$ is a vector of counter values where $\vk[i]$ represents the number of processes in location $i$.
		\item $\avg \in \mD^{|\absShared|}$ is a vector of shared variables values, where $\avg[i]$ is the parametric interval currently assigned to $\bx_i$.		
	\end{itemize}
	\item The set $\absConfs_0$ contains all \emph{initial abstract configurations}, i.e., those that satisfy 
	\[ \forall \bx_i \in \absShared: \:\: \absconf.\avg[i] = I_0 \text{ and } \sum_{i \in \mI } \absconf.\vk[i] \geq 0 \text{ and } \sum_{i \not\in \mI } \absconf.\vk[i] = 0 \]
	\item $\absTrans \subseteq \absConfs \times \absRules \times \absConfs$ is the set of transitions. 
	A \emph{transition} is a tuple $t=(\absconf, \ar, \absconf')$ where:	
	\begin{itemize}
		\item $\absconf'.\vk[\ar.\to] = \absconf.\vk[\ar.\to] + 1$
		\item $\absconf'.\vk[\ar.\from] = \absconf.\vk[\ar.\from] - 1$
		\item $\absconf.\avg \models \ar.\bphi$
		\item $\absconf'.\avg = \absconf.\avg \dot{+} \ar.\vup$, defined as follows: $\forall i < |\Shared| $:
		\begin{enumerate}
			\item $\absconf'.\avg[i] = \absconf.\avg[i]$, if $\ar.\vup[i] = 0$
			\item $(\absconf'.\avg[i] = \absconf.\avg[i]) \lor (\absconf'.\avg[i] = \absconf.\avg[i].next)$, if $\ar.\vup[i] = 1$
			\item $(\absconf'.\avg[i] = \absconf.\avg[i]) \lor (\absconf'.\avg[i] = \absconf.\avg[i].previous)$, if $\ar.\vup[i] = -1$
			\item[]the first disjunct in 2 and the second in 3 are omitted if $\absconf.\avg[i] = I_0$.
			\item $\forall x_i \in \ar.\resets: \:\: \absconf'.\avg[i] = I_0$
		\end{enumerate}
	\end{itemize}
	We also write $\transition{\absconf}{\ar}{\absconf'}$ instead of $(\absconf, \ar, \absconf') \in \absTrans$.
\end{itemize}
\end{definition}

\noindent
In contrast to prior work~\cite{john2013parameterized}, the domain of counters is not abstracted in $\acs$.

\smartpar{Paths of $\acs$.}
A sequence $\absconf_0, \ar_0, \absconf_1,\ldots,\absconf_{k-1},\ar_{k-1}, \absconf_k$ of alternating abstract configurations and rules is called a \emph{path} of $\acsa = \abcs$, if $\absconf_0 \in \absConfs_0$ and  $(\absconf_{i},\ar_i,\absconf_{i+1}) \in \absTrans$  for $0 \leq i \leq k$.
In this case we also write $\absconf_0 \rightarrow^* \absconf_k$.
We denote by $Paths(\acsa)$ the set of all paths of $\acsa$.

\begin{example}
\label{ex:path}
  Let $I_0=[0,1[,I_1=[1,t[,I_2=[t,n-t[,I_3=[n-t,\infty[ $.
	The following is a valid path of the abstract counter system of the TA in Figure \ref{fig:ta}:
	$[(4,0,0,0,0)(I_0,I_0)],\ar_0,[(3,0,1,0,0)(I_1,I_0)],\ar_0,[(2,0,2,0,0)(I_2,I_0)],\ar_0$,\\$[(1,0,3,0,0)(I_3,I_0)],\ar_2,[(1,0,2,1,0)(I_3,I_0)]$.
\end{example}

\smartpar{Relation between $\acsa$ and $\csa$.}
In comparison to $\cs$, in $\acs$ we drop the resilience condition, as well as the function $N$ that determines the number of processes to be modeled.
Moreover, a transition in $\acs$ may jump from one interval to the next too early and may stay in the same interval although it had to move.
We will formalize the relation between the two models in Sect.~\ref{sec:spurious}.

\section{Specifications}
\label{sec:specs}
We consider three kinds of specifications: \emph{general coverability}, which refers to the notion of coverability that is widely used in parameterized verification, e.g., Petri nets~\cite{Finkel91} or VASS~\cite{rackoff1978covering}, (non-general) \emph{coverability} and \emph{reachability}. The latter two
are specifications used in prior work on threshold automata~\cite{esparza2020TAcomplexity,konnov20172,kukovec2018reachability}.
Note that general coverability can express mutual exclusion, e.g., that there cannot be two leaders at the same time, while coverability cannot.

\begin{definition}\label{def:gen-coverability}
	The \emph{general parameterized coverability problem} is: Given $\csa$ and a \emph{general coverability specification} $\Confs_{spec} \subseteq \Confs$, decide if there is a path in $\csa$ that covers $\Confs_{spec}$ (i.e., decide if there is some configuration $\conf_r \in \Confs$ reachable from $\conf_0 \in \Confs_0$ and  $\exists \conf_{spec} \in \Confs_{spec}$ where $\forall i \: \conf_{spec}.\vk[i] \leq \conf_r.\vk[i]$).
\end{definition}

\begin{definition}\label{def:coverability}
	The \emph{parameterized coverability problem} is: Given a TA and \emph{coverability specification} $L_{spec} = L_{>0}$, decide if there is a path of its $\csa$ that satisfies $L_{spec}$ (i.e., decide if there is some configuration $\conf_r \in \Confs$ reachable from $\conf_0 \in \Confs_0$ and  $\conf_r$ satisfies $\forall i \in L_{>0} \: \conf_r.\vk[i] > 0$).
\end{definition}

\begin{definition}\label{def:reachability}
	The \emph{parameterized reachability problem} is: Given a TA and a \emph{reachability specification} $L_{spec} = (L_{=0}, L_{>0})$, decide if there is a path of $\csa$ that satisfies $L_{spec}$ (i.e., decide if there is some configuration $\conf_r \in \Confs$ reachable from $\conf_0 \in \Confs_0$ and  $\conf_r$ satisfies $\forall i \in L_{=0} \: \conf_r.\vk[i] = 0$ and  $\forall i \in L_{>0} \: \conf_r.\vk[i] > 0$).
\end{definition}

We define similarly all three types of problems for an abstract TA and its abstract counter system.
Usually, our specifications are definitions of error configurations, and therefore paths that satisfy them, are called \emph{error paths}.

\section{$\cs$ vs $\acs$}\label{sec:spurious}

We now show that the abstraction from $\cs$ to $\acs$ is complete with respect to the specifications introduced in the previous section.

A path  $\bpi = \absconf_0, \ar_0,\ldots,\ar_{m-1}, \absconf_m$ in $\acsa = \abcs$ \emph{corresponds} to the paths $\pi = \conf_0, r_0^{c_0},\ldots,r_{m-1}^{c_{m-1}}, \conf_m$~(where $r_i^{c_i}$ simulates applying $r_i$ $c_i$ times)  of $\csa=(\Confs, \Confs_0, \Trans)$ that satisfy the following conditions:\begin{itemize}

	\item $
	RC \land (\textstyle\sum_{j \in \mI } \conf_0.\vk[j] = N(n,t,f))  $
	\item $\forall i < m \; \conf_{i}.\vk[r_i.\from] = c_i + \conf_{i+1}.\vk[r_i.\from] \land
	\conf_{i+1}.\vk[r_i.\to] = c_i + \conf_{i}.\vk[r_i.\to] $
	\item $ \forall i < m \; \forall x_j \in \Shared \;
	x_j \notin r_i.\tau \implies
	\conf_{i+1}.\vg[j] = \conf_{i}.\vg[j] + c_i \cdot r_i.\vup[j]$
	\item $\forall i < m \; \forall x_j \in r_i.\resets \; \conf_{i+1}.\vg[j] = 0$
	\item $\forall i < m \;
	\forall x_j \in \Shared \;  \conf_{i}.\vg[j] \in \absconf_{i}.\avg[j] \land \conf_{i+1}.\vg[j] \in \absconf_{i+1}.\avg[j]
	$
	\item $\forall i < m \; c_i > 1 \implies ((\conf_{i+1}.\vg - r_i.\vup) \models r_i.\varphi)$\footnote{This is needed only in cases where an update affects any of the guards of $r_i.\varphi$}

\end{itemize}
Let $Concretize(\bpi)$ be the conjunction of the constraints above, where quantified formulas are instantiated to a finite conjunction of quantifier-free formulas.
Note that $Concretize(\bpi)$ is a quantifier-free formula in linear integer arithmetic, and a satisfying assignment for $Concretize(\bpi)$ (that can be computed by an SMT solver) represents a path of $\csa$ that corresponds to $\bpi$. 
We say that a path $\bpi \in Paths(\acsa)$ is \emph{spurious} if $Concretize(\bpi)$ is unsatisfiable. 

For a given $\bpi \in Paths(\acsa)$, let $Cover(\bpi)= \forall l \in L \; \conf_m.\vk[l] \geq \absconf_m.\vk[l]$.
We can show the
following
connection between error paths in $\csa$  and $\acsa$:

\begin{restatable}{lemma}{lemGenCovPathequiv}\label{lem:gen-cov-pathequiv}
$\acsa$ has a non-spurious path that covers a set of configurations $\Confs_{spec} \subseteq \Confs$ iff $\csa$ has a path that covers $\Confs_{spec}$.
\end{restatable}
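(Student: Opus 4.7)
I prove the two directions of the biconditional separately; they have very different flavors. The direction $(\Leftarrow)$ is a direct abstraction argument, while the direction $(\Rightarrow)$ requires lifting the non-spuriousness witness to a concrete path that additionally dominates some $\conf_{spec} \in \Confs_{spec}$.

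For $(\Leftarrow)$, given a concrete path $\pi = \conf_0, r_0, \ldots, r_{m-1}, \conf_m$ with $\conf_m.\vk \geq \conf_{spec}.\vk$ for some $\conf_{spec} \in \Confs_{spec}$, I would lift $\pi$ pointwise: let $\absconf_i.\vk := \conf_i.\vk$, take $\absconf_i.\avg[j]$ to be the unique interval of $\mD$ containing $\conf_i.\vg[j]$, and replace each $r_i$ by its abstract counterpart $\ar_i$. A direct inspection against \cref{def:abstractTA,def:abstractCS} shows every $(\absconf_i, \ar_i, \absconf_{i+1})$ is a legal abstract transition, because the interval-level updates in $\acs$ (stay, move to an adjacent interval, or reset) over-approximate the concrete $\pm 1$ updates and resets, and abstract guards are sound by construction. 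Non-spuriousness of the resulting $\bpi$ is witnessed by $\pi$ itself with all $c_i = 1$, which plainly satisfies $Concretize(\bpi)$, and $\absconf_m.\vk = \conf_m.\vk \geq \conf_{spec}.\vk$ gives abstract coverage.

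For $(\Rightarrow)$, let $\bpi$ be non-spurious with $\absconf_m.\vk \geq \conf_{spec}.\vk$, and let $\pi_0 = (\conf^0_0, c^0_0, \ldots, c^0_{m-1}, \conf^0_m)$ be a satisfying assignment of $Concretize(\bpi)$. The path $\pi_0$ on its own may fail to cover $\Confs_{spec}$, because large multiplicities $c^0_i$ can have drained intermediate locations below the abstract counters. I would repair this by superimposing on $\pi_0$ an extra $\alpha$ copies of the abstract path's mass: set $c^*_i := c^0_i + \alpha$, $\conf^*_0.\vk := \conf^0_0.\vk + \alpha\, \absconf_0.\vk$, and choose an admissible $\vp^*$ with $N(\vp^*) = N(\vp^0) + \alpha \sum_{l \in \mI} \absconf_0.\vk[l]$, which exists because the resilience conditions used in the paper admit arbitrarily large solutions. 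A direct induction on $i$ using the transition relation of $\cs$ yields $\conf^*_i.\vk = \conf^0_i.\vk + \alpha\, \absconf_i.\vk$, so $\conf^*_m.\vk \geq \alpha\, \absconf_m.\vk \geq \conf_{spec}.\vk$ for every $\alpha \geq 1$.

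The main obstacle will be to verify that $\pi^*$ actually executes in $\csa$: its shared-variable evolution must obey the $\cs$ semantics and every guard must hold under $\vp^*$. I would split $\bpi$ into maximal reset-free segments and, on each segment, decompose $\conf^*_i.\vg = \conf^0_i.\vg + \alpha\, G_i$, where $G_i$ is the cumulative shared-variable update along that segment. Since $\pi_0$ already places $\conf^0_i.\vg$ inside $\absconf_i.\avg$ at $\vp^0$, and the interval semantics of $\acs$ was defined so that an abstract interval crossing corresponds exactly to an additive shift of the concrete value by the rule's update vector, with an appropriate choice of $\vp^*$ the shifted vector $\conf^*_i.\vg$ still lies in $\absconf_i.\avg$ under $\vp^*$; the linear guards are preserved because their thresholds scale with $\vp^*$ in lockstep with the shared-variable shift. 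Once this scaling invariant is in place, $\pi^*$ is a valid concrete path witnessing coverage of $\Confs_{spec}$, which completes the implication.
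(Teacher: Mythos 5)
Your $(\Leftarrow)$ direction is essentially the paper's own argument: lift the concrete path pointwise, let each $\absconf_i.\avg[j]$ be the interval containing $\conf_i.\vg[j]$, and witness non-spuriousness by $\pi$ itself. (The paper additionally collapses consecutive repetitions of the same rule into one abstract step with multiplicity $c_i$, but your version with all $c_i=1$ is an equally valid witness for $Concretize(\bpi)$.)

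The gap is in $(\Rightarrow)$. In the paper, the coverage requirement is folded into the satisfiability check: the spuriousness test actually used (both in Algorithm~\ref{alg:ParamCC} and in the paper's proof of this lemma) is on the conjunction $Concretize(\bpi)\land Cover(\bpi)$, where $Cover(\bpi)$ asserts $\conf_m.\vk[l]\geq\absconf_m.\vk[l]$ for every $l$. A model of that conjunction is already a concrete path whose final counters dominate $\absconf_m$, hence dominate $\conf_{spec}$; no repair is needed. You instead take an arbitrary model $\pi_0$ of $Concretize(\bpi)$ alone and try to restore coverage by superimposing $\alpha$ copies of the abstract mass, setting $c^*_i=c^0_i+\alpha$. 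The counter arithmetic of that construction is fine, but the shared-variable side does not go through: with $c^*_i=c^0_i+\alpha$ the value of $x_j$ at position $i$ shifts by $\alpha\, G_i[j]$, while $Concretize$ also demands $\conf^*_i.\vg[j]\in\absconf_i.\avg[j]$, a \emph{two-sided} parametric interval $[d_l(\vp),d_{l+1}(\vp)[$. The shifts $\alpha\, G_i[j]$ differ across positions $i$ and variables $j$ (and vanish wherever the variable is untouched), whereas the thresholds are affine in the single shared parameter vector $\vp^*$ with heterogeneous coefficients, so there is in general no choice of $\vp^*$ that moves every relevant interval endpoint in lockstep with every shifted value. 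Already the simplest case breaks the argument: if $G_i[j]=0$ but the lower endpoint $d_l$ grows with $\vp^*$, the unshifted value $\conf^0_i.\vg[j]$ falls below $d_l(\vp^*)$; upper guards fail symmetrically when $G_i[j]>0$. So the ``scaling invariant'' you appeal to cannot be established, and this direction as written is not proved. The repair is unnecessary once non-spuriousness of a covering path is read, as intended by the paper, as satisfiability of $Concretize(\bpi)\land Cover(\bpi)$: the concrete covering path is then read off the model directly.
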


\noindent Note that Lemma \ref{lem:gen-cov-pathequiv} subsumes  the case of non-general coverability.

Similarly, if $\bpi$ is a non-spurious path of $\acsa$ that satisfies a reachability specification $L_{spec}$, let $Reach(\bpi)= \forall l \in L \; \conf_m.\vk[l] > 0 \iff \absconf_m.\vk[l] > 0$.
Then we can show the following with a similar proof as above, where we replace $Cover(\bpi)$ with $Reach(\bpi)$ and reason accordingly.

\begin{lemma}\label{lem:reach-cov-pathequiv}
$\acsa$ has a non-spurious path that satisfies a reachability specification $L_{spec}$ iff $\csa$ has a path that satisfies $L_{spec}$.
\end{lemma}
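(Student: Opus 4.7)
The plan is to mirror the structure of the proof of Lemma~\ref{lem:gen-cov-pathequiv}, substituting $Reach(\bpi)$ for $Cover(\bpi)$ throughout; that is, I read ``non-spurious'' in this setting as witnessed by satisfiability of $Concretize(\bpi) \wedge Reach(\bpi)$ rather than $Concretize(\bpi) \wedge Cover(\bpi)$. Since counters in $\acs$ are not abstracted, the most direct route is to align concrete and abstract paths one-to-one (i.e.\ $c_i = 1$ for every step), making the counters coincide exactly and both $Reach(\bpi)$ and the reachability constraints trivial.

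For the $\Leftarrow$ direction, I start from a concrete path $\pi = \conf_0, r_0, \ldots, r_{M-1}, \conf_M$ of $\csa$ satisfying $L_{spec}$. I build an abstract path $\bpi = \absconf_0, \ar_0, \ldots, \ar_{M-1}, \absconf_M$ by associating to each $r_i$ its abstract counterpart $\ar_i$ and defining $\absconf_i.\vk := \conf_i.\vk$ together with $\absconf_i.\avg[j]$ as the unique interval of $\mD$ containing $\conf_i.\vg[j]$. I then verify that $\bpi$ is a valid path of $\acsa$: the counter updates match trivially; shared-variable updates respect Definition~\ref{def:abstractCS}, because a concrete change of $\pm 1$ either leaves the value in the current interval or pushes it to the adjacent one, and a concrete reset sends the value to $0 \in I_0$; and each abstract guard $\ar_i.\bvarphi$ holds because it is exactly the disjunction over intervals consistent with the satisfied concrete guard $r_i.\varphi$. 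The abstract path satisfies $L_{spec}$ because its counters coincide with those of $\pi$. Finally $\pi$ itself is a satisfying assignment of $Concretize(\bpi)$ with all $c_i = 1$, and $Reach(\bpi)$ holds trivially since $\conf_M.\vk = \absconf_M.\vk$; hence $\bpi$ is non-spurious.

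For the $\Rightarrow$ direction, suppose $\bpi$ is a non-spurious path of $\acsa$ satisfying $L_{spec}$, meaning that $Concretize(\bpi) \wedge Reach(\bpi)$ is satisfiable. A satisfying assignment yields a concrete path $\pi = \conf_0, r_0^{c_0}, \ldots, r_{m-1}^{c_{m-1}}, \conf_m$ of $\csa$. By $Reach(\bpi)$, $\conf_m.\vk[l] > 0 \iff \absconf_m.\vk[l] > 0$ for every $l \in L$. Combined with the assumption that $\bpi$ satisfies $L_{spec} = (L_{=0}, L_{>0})$ in $\acsa$, the biconditional transfers both the zero and non-zero constraints from $\absconf_m$ to $\conf_m$, and therefore $\pi$ satisfies $L_{spec}$.

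The main subtlety, as hinted by the paper, is that reachability imposes a biconditional on the supports of $\conf_m$ and $\absconf_m$, rather than the one-sided inequality used for coverability, so one cannot simply rely on satisfiability of $Concretize(\bpi)$ alone. In the backward direction this is neutralized by using $c_i = 1$ throughout, which additionally makes the conditional ``$c_i > 1 \implies (\conf_{i+1}.\vg - r_i.\vup) \models r_i.\varphi$'' clause of $Concretize$ vacuous; in the forward direction the biconditional is absorbed into the strengthened non-spuriousness hypothesis $Concretize(\bpi) \wedge Reach(\bpi)$. Once these alignments are in place, the verification of abstract validity (guards, shared-variable transitions, resets) carries over verbatim from the proof of Lemma~\ref{lem:gen-cov-pathequiv}.
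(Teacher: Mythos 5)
Your proof is correct and follows essentially the same route as the paper, which itself only remarks that the argument is the proof of Lemma~\ref{lem:gen-cov-pathequiv} with $Cover(\bpi)$ replaced by $Reach(\bpi)$: extract a concrete path from a satisfying assignment of $Concretize(\bpi)\wedge Reach(\bpi)$ in one direction, and abstract a concrete path and exhibit it as its own concretization witness in the other. The only (harmless) deviation is that in the backward direction you keep the correspondence one-to-one with all $c_i=1$ instead of collapsing repeated applications of the same rule into a single abstract step as the paper does; this slightly lengthens the abstract path but makes the non-spuriousness and $Reach(\bpi)$ checks immediate.
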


\section{Checking General Parameterized Coverability}
\label{sec:checking-coverability}

In this section, we show how abstract counter systems (including ours) can be framed as well-structured transition systems. We also introduce a parameterized model checking algorithm for checking general parameterized coverability in $\acs$.

\subsection{Well-structured transition systems}

\emph{Well-structured transition systems}~\cite{finkel2001well} (WSTS) are a class of infinite-state systems for which the general parameterized coverability problem is decidable \cite{abdulla1996general,jacobs2022automatic}.
In the following, we recap the standard definitions of WSTS.

\smartpar{Well-quasi-order.}
Given a set $\cstateset$, a binary relation $\preceq \: \subseteq \cstateset \times \cstateset$ is a \emph{well-quasi-order}
(wqo) if $\preceq$
is reflexive, transitive, and if any infinite sequence $\cstate_0,\cstate_1,\ldots
\in \cstateset^{\omega}$ contains a pair $\cstate_i \preceq \cstate_j$ with $i < j$.
A subset $R \subseteq \cstateset$ is an \emph{anti-chain} if any two distinct elements of $R$ are incomparable wrt. $\preceq$.
Therefore, $\preceq$  is a wqo on $\cstateset$ if and only if it is reflexive, transitive, and has no infinite anti-chains.
The \emph{upward closure} of a set $R \subseteq \cstateset$,
denoted ${\uparrow}R$, is the set $\{\cstate \in \cstateset \mid \exists \cstate' \in R: \cstate' \preceq
\cstate\}$.
We say that $R$ is \emph{upward-closed} if ${\uparrow}R = R$, and we call $B
\subseteq S$ a \emph{basis} of $R$ if ${\uparrow}B = R$. 
If $\preceq$ is also anti-symmetric, then any basis of $R$ has a unique subset of minimal elements.
We call this set the \emph{minimal basis} of $R$, denoted $minBasis(R)$.

\smartpar{Compatibility.}
Given a transition system $M=(\cstateset,\cstateset_0,\Trans)$, we say that a wqo $\preceq \; \subseteq \cstateset \times \cstateset$ is \emph{compatible} with the transition relation $\Trans$ if the following
holds:
$$ \forall \cstate,\cstate',\cstate_x \in \cstateset: \text{ if } \transition{\cstate}{}{\cstate'}
\text { and } \cstate \preceq \cstate_x \text{ then }
\exists \cstate_x' \text{ with } \cstate' \preceq \cstate_x'
\text{ and } \transition{\cstate_x}{}{^*\cstate_x'},$$
where $\transition{\cstate}{}{\cstate'}$ is a transition in $\Trans$, and $\transition{\cstate_x}{}{^*\cstate_x'}$ is a path in $M$.

\smartpar{WSTS.}
We say that $(M, \preceq)$ with $M=(\cstateset,\cstateset_0,\Trans)$ is a \emph{well-structured transition system} if $\preceq$ is a wqo on $\cstateset$ that is compatible with $\Trans$.
The set of \emph{immediate predecessors} of a set $R\subseteq \cstateset$ is
$pred(R) = \{\cstate \in \cstateset \mid \exists \cstate' \in R: \transition{\cstate}{}{\cstate'} \}.$
We say that a WSTS  $(M,\preceq)$ \emph{has effective $pred$-basis} if there
exists an algorithm that takes as input any finite set $R \subseteq \cstateset$ and
returns a
finite basis of $pred({\uparrow}R)$.

\subsection{Abstract Counter Systems as WSTS}\label{sec:acsWSTS}

To prove that the general parameterized coverability problem is decidable for abstract TAs, %
we show that an $\acs$ can be framed as WSTS.
Here, for a given set $E' \subseteq \absConfs$ we define $pred(E') = \{ \absconf \in \absConfs \mid \absconf' \in E' \land \exists \ar \in \absRules \text{ s.t. } (\absconf, \ar, \absconf') \in \absTrans\}$.

\begin{restatable}{lemma}{lemAbcsWSTS}\label{lem:abcsWSTS}
	Given an abstract counter system $\acsa =  \abcs$ let
	$\ord \; \subseteq \absConfs
	\times \absConfs$ be the binary relation defined by:
	$$(\vk,\avg) \ord (\vk',\avg') \iff \vk \leq \vk' \land \avg = \avg'$$
	where $\leq$ is the component-wise ordering of vectors.
	Then $(\acsa,\lessapprox)$ is a WSTS.
\end{restatable}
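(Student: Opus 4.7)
The plan is to verify the two defining properties of a WSTS: that $\ord$ is a well-quasi-order on $\absConfs$, and that $\ord$ is compatible with $\absTrans$.

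\textbf{Step 1: $\ord$ is a wqo.} Reflexivity and transitivity are immediate from the componentwise order $\leq$ on $\Nat_0^{|L|}$ and equality on $\mD^{|\absShared|}$. For the absence of infinite antichains, I would use the fact that the abstract domain $\mD$ is finite, so $\mD^{|\absShared|}$ is finite. Given any infinite sequence $(\vk_i, \avg_i)_{i \in \Nat}$, the pigeonhole principle yields an infinite subsequence on which $\avg_i$ is constant. On this subsequence, Dickson's lemma applied to $\Nat_0^{|L|}$ produces indices $i < j$ with $\vk_i \leq \vk_j$, hence $(\vk_i, \avg_i) \ord (\vk_j, \avg_j)$.

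\textbf{Step 2: Compatibility.} Suppose $\transition{\absconf}{\ar}{\absconf'}$ in $\acsa$ and $\absconf \ord \absconf_x$, so $\absconf.\vk \leq \absconf_x.\vk$ and $\absconf_x.\avg = \absconf.\avg$. I will show that the same rule $\ar$ is enabled at $\absconf_x$ and that its application can be chosen so that the resulting $\absconf_x'$ satisfies $\absconf' \ord \absconf_x'$ (so a single transition suffices, not a longer path). The guard condition $\absconf_x.\avg \models \ar.\bphi$ is immediate since $\absconf_x.\avg = \absconf.\avg$ and $\absconf.\avg \models \ar.\bphi$. The fact that $\absconf_x.\vk[\ar.\from] \geq 1$ follows from $\absconf.\vk[\ar.\from] \geq 1$ (needed to fire $\ar$ at $\absconf$) and $\absconf.\vk \leq \absconf_x.\vk$. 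Define $\absconf_x'$ by firing $\ar$ at $\absconf_x$ while making the \emph{same} nondeterministic choice} for each shared variable update as was made when firing $\ar$ at $\absconf$ — i.e., for coordinates with $\ar.\vup[i] \in \{+1,-1\}$ where \cref{def:abstractCS} offers two options, pick the same disjunct. This ensures $\absconf_x'.\avg = \absconf'.\avg$ coordinate-wise (the deterministic cases, $\ar.\vup[i]=0$ and resets $x_i \in \ar.\resets$, trivially agree because $\absconf_x.\avg = \absconf.\avg$).

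\textbf{Step 3: Counter ordering is preserved.} For $i \notin \{\ar.\from, \ar.\to\}$, counters do not change, so $\absconf'.\vk[i] = \absconf.\vk[i] \leq \absconf_x.\vk[i] = \absconf_x'.\vk[i]$. For the two active locations, both $\absconf'$ and $\absconf_x'$ subtract one from $\ar.\from$ and add one to $\ar.\to$, so the inequality is preserved. Therefore $\absconf' \ord \absconf_x'$, which establishes compatibility in the strong form (a one-step simulation), implying the weaker path-based form required by the WSTS definition.

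\textbf{Main obstacle.} The only subtle point is matching the nondeterministic abstract updates so that the shared-variable components of $\absconf'$ and $\absconf_x'$ agree exactly. The extension to decrements and resets does not cause any new difficulty here, because the set of available updates on shared variables depends only on $\ar$ and on $\absconf.\avg = \absconf_x.\avg$, so the same choice is available on both sides. The wqo part is standard once one observes that the abstract domain is finite and the counter component lives in $\Nat_0^{|L|}$.
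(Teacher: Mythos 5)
Your proof is correct and follows essentially the same route as the paper's: reflexivity and transitivity are immediate, the wqo property reduces to Dickson's lemma over the counter component combined with the finiteness of the abstract domain, and strong compatibility is obtained by firing the same rule from the larger configuration while resolving the nondeterministic interval update in the same way. The only difference is that you spell out the pigeonhole/Dickson argument and the counter bookkeeping in more detail than the paper does.
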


Note also that the order $\ord$ is anti-symmetric, and therefore every upward-closed set of configurations has a unique minimal basis.

\begin{restatable}{lemma}{lemPredbasis}\label{lem:predbasis}
	Given an abstract counter system $\acsa = \abcs$, the WSTS $(\abcs, \ord)$ has effective pred-basis.	
\end{restatable}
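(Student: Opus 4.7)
The plan is to give an explicit procedure that, for any finite $R \subseteq \absConfs$, returns a finite basis of $pred({\uparrow}R)$. The procedure iterates over every pair $(\absconf',\ar) \in R \times \absRules$ with $\absconf' = (\vk',\avg')$, enumerates finitely many candidate predecessors $(\vk^c,\avg)$ whose $\ar$-successor dominates $\absconf'$ under $\ord$, and returns the union of all such candidates. Finiteness of the output follows because $|R|$, $|\absRules|$, $|\mD|$, and $|L|$ are all finite.

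I would treat the counter and shared-variable components independently. For the counter component, since each rule moves exactly one process from $\ar.\from$ to $\ar.\to$, the unique $\ord$-minimal counter $\vk^c$ whose $\ar$-successor dominates $\vk'$ componentwise is
\[
\vk^c[\ar.\from] = \vk'[\ar.\from]+1,\quad \vk^c[\ar.\to] = \max(\vk'[\ar.\to]-1,\ 0),\quad \vk^c[i] = \vk'[i]\ \text{otherwise}
\]
(with $\vk^c = \vk'$ if $\ar.\from = \ar.\to$). Note that $\vk^c[\ar.\from] \geq 1$, so the counter side of enabledness is automatic.

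For the shared-variable component, I would enumerate componentwise all $\avg \in \mD^{|\absShared|}$ whose nondeterministic update under $\ar$, as specified in \cref{def:abstractCS}, can yield $\avg'$: if $i \in \ar.\resets$, then $\avg'[i]=I_0$ is forced and $\avg[i]$ may be any of the $|\mD|$ intervals; if $i \notin \ar.\resets$ and $\ar.\vup[i]=0$, then $\avg[i]=\avg'[i]$; and if $\ar.\vup[i]=\pm 1$ with $i \notin \ar.\resets$, then at most two options for $\avg[i]$ arise from inverting the abstract $\textit{next}/\textit{previous}$ step, respecting the boundary clauses at $I_0$. Discard any $\avg$ that fails $\avg \models \ar.\bphi$, and for each surviving $\avg$ add $(\vk^c,\avg)$ to the candidate set.

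Correctness splits into two directions. Every candidate $(\vk^c,\avg)$ lies in $pred({\uparrow}R)$, since firing $\ar$ from $(\vk^c,\avg)$ produces $(\vk^{c}{}',\avg')$ with $\vk^{c}{}' \geq \vk'$, hence in ${\uparrow}R$. Conversely, for any $(\vu,\avg) \in pred({\uparrow}R)$ there exist $\ar$ and $(\vk',\avg') \in R$ such that $(\vu,\avg) \xrightarrow{\ar} (\vu',\avg')$ with $\vu' \geq \vk'$; then $\avg$ is one of the enumerated source shared vectors for the pair $((\vk',\avg'),\ar)$, and a direct componentwise check gives $\vk^c \leq \vu$, so $(\vk^c,\avg) \ord (\vu,\avg)$ and the candidate set is indeed a basis. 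The one genuine subtlety is the reset case, where the predecessor's value of a reset variable is left entirely unconstrained by $\avg'$; the finiteness of $\mD$ is precisely what keeps the enumeration bounded and the procedure effective.
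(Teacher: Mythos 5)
Your proof is correct and follows essentially the same route as the paper's: the paper's $CPredBasis(E')$ consists of exactly your candidates, with your $\max(\vk'[\ar.\to]-1,0)$ unifying its two explicit cases ($\vk'[j]>0$ versus $\vk'[j]=0$, where the target counter is bumped by a unit vector before taking the predecessor), and the containment argument in both directions is the same. The only cosmetic differences are that you make the finite enumeration of source shared-variable vectors explicit (the paper leaves it implicit in the existence of a transition) and that you should add one line noting that $pred({\uparrow}R)$ is upward-closed by the compatibility established in the previous lemma, so that your first inclusion indeed yields ${\uparrow}B \subseteq pred({\uparrow}R)$ and not merely $B \subseteq pred({\uparrow}R)$.
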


Let $BasisTrans$ be the transitions from which we computed $CPredBasis$ in the proof of Lemma \ref{lem:predbasis}. Concretely, let $\vu_j$ be the unit vector with $\vu_j(j) = 1$ and $\vu_j(i) = 0$ for $i \neq j$ then $BasisTrans(E')$ is the set
\[
\left\{
\begin{array}{l|l}
((\vk,\avg), \ar,(\vk',\avg')) \in \absTrans 
&	(\vk',\avg') \in E' \land \ar= (l_i,l_j,\bphi,\vup, \resets) \land \\ 
& \left[ \transition{(\vk,\avg)}{\ar}{(\vk',\avg')} \lor   (\transition{(\vk,\avg)}{\ar}{(\vk' {+} \vu_j,\avg')}{\land}\vk'[j]{=}0) \right] 
									\end{array}					
											\right\}.
\]

The following corollary, derived from the aforementioned definitions,  will be instrumental in establishing the correctness of our algorithms.

\begin{corollary}\label{cor:compStatesPred}
	Given $\absconf'_1,\absconf'_2 \in \absConfs$, if $\absconf'_1 \ord \absconf'_2$ then $\{\ar \in \absRules \mid (\absconf_1,\ar,\absconf'_1) \in BasisTrans(\absconf'_1) \} = \{\ar \in \absRules \mid (\absconf_2,\ar,\absconf'_2) \in BasisTrans(\absconf'_2) \}$.
\end{corollary}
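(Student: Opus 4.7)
Since $\absconf'_1 \ord \absconf'_2$ unfolds, by the definition of $\ord$ in Lemma~\ref{lem:abcsWSTS}, to the two configurations sharing a common abstract-variable vector, say $\avg'$, while their counter vectors satisfy $\vk'_1 \leq \vk'_2$. My plan is to prove the stronger claim that the set of rules labelling some transition in $BasisTrans(\absconf')$ is determined solely by $\avg'$ and $\absRules$, and does not depend on the counter vector $\vk'$ of the target.

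Fix an arbitrary rule $\ar = (l_i, l_j, \bphi, \vup, \resets)$. Both disjuncts in the definition of $BasisTrans(\absconf')$ require an abstract source $\avg$ with $\avg \models \bphi$ whose image under $\dot{+}\vup$ and the resets in $\resets$ equals $\avg'$. Because the relation $\dot{+}$, the satisfaction $\models$, and the resets all act purely on the abstract-variable component, this feasibility condition depends only on $\avg'$ and $\ar$; I call this the \emph{abstract-side condition} and note that it is identical for $\absconf'_1$ and $\absconf'_2$.

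It remains to show that whenever the abstract-side condition holds, a counter witness $\vk$ can be constructed, independently of whichever $\vk'$ the target uses. I would split on the value of $\vk'[l_j]$. If $\vk'[l_j] \geq 1$, the first disjunct is applicable by taking $\vk := \vk' + \vu_{l_i} - \vu_{l_j}$: all coordinates of $\vk$ are non-negative, $\vk[l_i] \geq 1$, and $(\vk,\avg) \to (\vk',\avg')$ is a legitimate transition in $\absTrans$. If $\vk'[l_j] = 0$, the second disjunct is applicable by taking $\vk := \vk' + \vu_{l_i}$, giving the transition $(\vk,\avg) \to (\vk' + \vu_{l_j},\avg')$. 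The self-loop case $l_i = l_j$ collapses to $\vk := \vk'$ or $\vk := \vk' + \vu_{l_j}$, both of which still provide $\vk[l_i] \geq 1$. Thus the existence of such $\vk$ imposes no additional constraint on $\vk'$.

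Putting these pieces together shows that $\ar$ belongs to either side of the claimed equality iff the abstract-side condition holds, and this condition depends on $\avg'$ only. Since $\absconf'_1$ and $\absconf'_2$ share $\avg'$, the two sets coincide. The main obstacle is the bookkeeping in the counter-witness construction, in particular the case split on $\vk'[l_j]$ and the degenerate self-loop case $l_i = l_j$; once those are dispatched, the conceptual point is just that the abstract semantics is oblivious to the concrete counters, and therefore so is membership in $BasisTrans$ at the level of rule labels.
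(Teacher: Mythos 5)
Your proof is correct and follows the same (essentially only) line of reasoning the paper intends: the paper states this corollary without an explicit proof, as an immediate consequence of the definition of $BasisTrans$. Your key observation --- that membership of a rule in $BasisTrans(\absconf')$ is determined solely by the abstract-variable vector $\avg'$, since a counter witness can always be constructed via the case split on $\vk'[l_j]$, and $\absconf'_1 \ord \absconf'_2$ forces $\avg'_1 = \avg'_2$ --- is exactly the implicit argument.
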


\subsection{WSTS-based
  General Parameterized Coverability Checking}

In this section, we present our algorithm for solving the general parameterized coverability problem~(see Definition \ref{def:gen-coverability}).
Given an abstract counter system $\acsa$ and a finite set of error configurations $ERR \subseteq \absConfs$, 
we say that a path of $\acsa$ is an \emph{error path} if it starts in $\absConfs_0$ and ends in the upward closure of $ERR$.
Lemma \ref{lem:predbasis} enables us to use the transitive closure of $CPredBasis(ERR)$ to compute a set of error paths in $\acsa$ or a fixed-point in which no initial configuration occurs.
As defined in Section \ref{sec:spurious}, we can examine whether any of these abstract error paths corresponds to error paths in $\csa$.
The detailed algorithm is given in Algorithm~\ref{alg:ParamCC}, which we explain in the following.

\begin{algorithm}[h]
	\caption{General Parameterized Coverability Checking}\label{alg:ParamCC}
	
	\begin{algorithmic}[1]		
		\Procedure{CheckCoverability}{\emph{Abstract Counter System} $ACS$,$ERR$}
		\State $E_0 \gets ERR, i \gets 1$, $errGraph \gets \emptyset$
		\State $visitedTrans \gets \emptyset$ \label{line:mc-initialise}{\color{brown}//set of visited transitions} 
		\While{$E_{i-1} \neq \emptyset$} \label{line:mc-while}  {\color{brown}//has a fixed-point been reached?} 
		\State $E_i, visitedTrans \gets \Call{ComputePredBasis}{E_{i-1},visitedTrans}$ \label{line:mc-pred}		
		\State $i \gets i+1$ 
		\EndWhile
		\State $visitedInitialConfigs \gets \bigcup_{j <i} (E_j \cap \Confs_0)$\label{line:mc-intersect0}
		\If{$visitedInitialConfigs \neq \emptyset$}\label{line:mc-intersect} \ \ {\color{brown} //intersects with initial configurations?}
		\State $errGraph \gets \Call{ConstructErrGraph}{visitedInitialConfigs,visitedTrans}$\label{line:errTree}
		\State $nonSpuriousCE = CheckforNonSpuriousCEs(errGraph)$\label{line:smt}
		\If{$nonSpuriousCE \neq \emptyset$} {\color{brown} //at least one CE is non-spurious}
		\State \textbf{return} $nonSpuriousCE$ {\color{brown} //an error is found!}\label{line:mc-error}
		\EndIf
		\EndIf
		\State \textbf{return} `` The system is safe! ''	\label{line:mc-true}			
		\EndProcedure
		
		\Procedure{ComputePredBasis}{$E_{i-1}$,$visitedTrans$}
		\State Compute $BasisTrans(E_{i-1})$ as explained at the end of Section \ref{sec:acsWSTS}
		\State $E_i= \{\absconf \in \absConfs \mid \exists (\absconf,\ar,\absconf') \in BasisTrans(E_{i-1})\}$\label{line:lemma}%
		\State $visitedTrans.add(BasisTrans(E_{i-1}))$ \label{line:appendAll}
		\State $finalE_i \gets \emptyset$
		\ForAll{$\absconf \in E_i$} {\color{brown} //remove visited bigger configurations to ensure termination}
		\If{$\exists j \leq i, \absconf_s \in E_j$ s.t. $\absconf_s \ord \absconf$} {\color{brown} //smaller configurations, maybe $\geq$ 1}
		\State $visitedTrans.add(\{(\absconf_s,\ar,\absconf') \mid (\absconf,\ar,\absconf') \in  BasisTrans(E_{i-1}) \})$\label{line:updateMap1}
		\Else
		\State $finalE_i.add(\absconf)$
		\If{$\exists \absconf_b \in finalE_i$ s.t. $\absconf \ord \absconf_b$ } {\color{brown} //bigger configurations, maybe $>1$}
		\State $visitedTrans.add(\{(\absconf,\ar,\absconf') \mid (\absconf_b,\ar,\absconf') \in  BasisTrans(E_{i-1}) \})$\label{line:updateMap2}
		\State $finalE_i.remove(\absconf_b)$
		\EndIf
		\EndIf
		\EndFor
		\State \textbf{return} $finalE_i,visitedTrans$  
		\EndProcedure
	\end{algorithmic}	
\end{algorithm}

Procedure $\textproc{CheckCoverability}$ takes as argument an $\acsa$ and a basis for a set of error configurations $ERR$.
After initializing local variables, the procedure enters a while loop that, given $E_{i-1}$,
invokes
$\textproc{ComputePredBasis}$~(Line \ref{line:mc-pred}), a sub-procedure to compute $E_i \supseteq minBasis(pred({\uparrow}E_{i-1}))$ such that $\forall \absconf \in {\uparrow}E_i \:\; \forall j <i \:\; \absconf \not \in {\uparrow}E_j$.
$\textproc{ComputePredBasis}$ also updates $visitedTrans$ which represents the set of transitions
explored so far.
The loop breaks once a fixed-point has been reached.
Line \ref{line:mc-intersect0} computes the visited set of initial configurations $visitedInitialConfigs$.
If $visitedInitialConfigs$ is not empty then the procedure extracts all computed error paths~(Line \ref{line:errTree}).
$\textproc{ConstructErrGraph}$ starts from the discovered initial configuration(s) and  uses $visitedTrans$ to construct the error graph $errGraph$ which encodes all error paths. 
In a breadth-first fashion, the procedure $\textproc{CheckforNonSpuriousCEs}$ then unfolds $errGraph$ and evaluates the spuriousness of uncovered error paths that starts in $\absConfs_0$ and ends in $ERR$~(Line \ref{line:smt}).
A path $\bpi=\absconf_0,\ar_0,\ldots,\ar_{j-1},\absconf_j$ in $errGraph$ is spurious if $Concretize(\bpi) \land Cover(\bpi)$ is unsatisfiable (see Sect.~\ref{sec:spurious}).
In the presence of cycles\footnote{A \emph{cycle} is a sub-sequence of a path that starts and ends in the same configuration.}, unfolding may not terminate in the presence of shared-variable decrements and/or resets~(see Theorem \ref{th:ParamCC-Term}).
If all error paths are spurious, we conclude that the system is safe. Otherwise a concrete path is returned as a witness of the buggy system~(Line \ref{line:mc-error}).

Procedure $\textproc{ComputePredBasis}$ takes as argument the current set $E_{i-1}$, and the set of visited transitions $visitedTrans$.
We compute $E_i$ as the predecessors of $E_{i-1}$ as explained at the end of Section \ref{sec:acsWSTS}~(Line~\ref{line:lemma}).
At this point, $E_i$ may contain configurations larger than those we have already explored.
To ensure termination, they must be removed.
However, before removing these configurations, we collect their visited transitions since these may introduce new, unexplored behaviors.
The computation of a comprehensive set of visited transitions, denoted as $visitedTrans$ in this context, is crucial for the correctness of our algorithm.
The importance of this lies in the necessity to address spurious counterexamples, which mandates retrieving all paths from an initial configuration to an error configuration. This is in contrast to ordinary backward model checking, where a single counterexample suffices.
After adding all computed transitions to $visitedTrans$~(Line \ref{line:appendAll}), we check for each configuration $\absconf$ in $E_i$ the following:
\begin{itemize}
\item For every configuration $\absconf_s$ with $\absconf_s \ord \absconf$, we add
    from $BasisTrans(E_{i-1})$ transitions that start in $\absconf$ to $visitedTrans$ after replacing $\absconf$ with $\absconf_s$.
	\item Otherwise, we add $\absconf$ to $finalE_i$.
	Also, for every $\absconf_b \in finalE_i$ with $\absconf \ord \absconf_b$, we add from $BasisTrans(E_{i-1})$ transitions that start in $\absconf_b$ to $visitedTrans$ after replacing $\absconf_b$ with $\absconf$, and we remove $\absconf_b$ from $finalE_i$.
\end{itemize}

\subsection{Correctness.}

Algorithm \ref{alg:ParamCC} is sound, complete, and terminates when the TA is restricted as in \cite{esparza2020TAcomplexity,konnov2017short,kukovec2018reachability, konnov20172, konnov2018bymc}.
Soundness follows directly from encoding non-spuriousness into the constraint $Concretize(\bpi) \land Cover(\bpi)$, as defined in Section \ref{sec:spurious}.

\begin{corollary}[Soundness]
	Algorithm \ref{alg:ParamCC} is sound. That is, if the algorithm computes a non-spurious error path, then there is a configuration in $\nuparrow ERR$ that is reachable in $\csa$.
\end{corollary}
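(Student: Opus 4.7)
The plan is to reduce the statement to Lemma \ref{lem:gen-cov-pathequiv} by carefully unpacking what ``the algorithm computes a non-spurious error path'' means in terms of the objects manipulated by Algorithm \ref{alg:ParamCC}.

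First, I would trace the control flow of Algorithm \ref{alg:ParamCC}: the only exit that returns a counterexample is Line \ref{line:mc-error}, and it is reached precisely when $\textproc{CheckforNonSpuriousCEs}$ has exhibited a path $\bpi = \absconf_0, \ar_0, \ldots, \ar_{j-1}, \absconf_j$ in $errGraph$ for which the formula $Concretize(\bpi) \land Cover(\bpi)$ is satisfiable. By the way $errGraph$ is constructed from $visitedInitialConfigs$ (Line \ref{line:mc-intersect0}) and $visitedTrans$ (Line \ref{line:errTree}), the first configuration $\absconf_0$ lies in $\absConfs_0$ and the last configuration $\absconf_j$ lies in $\nuparrow ERR$.

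Second, once $\bpi$ is confirmed to be a genuine path of $\acsa$, the conclusion is essentially immediate. A satisfying assignment for $Concretize(\bpi)$ instantiates, as explained in Section \ref{sec:spurious}, a concrete path $\pi = \conf_0, r_0^{c_0}, \ldots, r_{j-1}^{c_{j-1}}, \conf_j$ in $\csa$ that corresponds to $\bpi$. The extra conjunct $Cover(\bpi)$ forces $\conf_j.\vk[l] \geq \absconf_j.\vk[l]$ for every location $l$, so $\conf_j$ covers $\absconf_j$ and therefore lies in $\nuparrow ERR$. Since $Concretize$ also pins $\conf_0$ to $\Confs_0$, we obtain a reachable configuration in $\nuparrow ERR$ in $\csa$, which is the desired witness; the formal bridge from the abstract witness to the concrete one is exactly Lemma \ref{lem:gen-cov-pathequiv}.

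The step I would examine most carefully, and where I expect the bulk of the work to lie, is justifying that every path in $errGraph$ is an honest path of $\acsa$, i.e., that every transition in $visitedTrans$ genuinely belongs to $\absTrans$. Transitions enter $visitedTrans$ in two ways: directly from $BasisTrans(E_{i-1})$ at Line \ref{line:appendAll}, where they are in $\absTrans$ by definition; and through the substitution steps at Lines \ref{line:updateMap1}--\ref{line:updateMap2}, where the source of a transition is replaced by an $\ord$-comparable configuration $\absconf_s$ or $\absconf_b$. For the substituted case I would invoke Corollary \ref{cor:compStatesPred} together with the compatibility of $\ord$ with $\absTrans$ from Lemma \ref{lem:abcsWSTS} to argue that the replacement is still realized as a transition in $\absTrans$, so concatenating the transitions along $\bpi$ produces a bona fide abstract path. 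Once this invariant on $visitedTrans$ is in place, soundness follows directly from the application of Lemma \ref{lem:gen-cov-pathequiv} outlined above.
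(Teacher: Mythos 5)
Your first two paragraphs are exactly the paper's argument: the paper gives no separate proof of this corollary beyond the remark that soundness ``follows directly from encoding non-spuriousness into the constraint $Concretize(\bpi)\land Cover(\bpi)$'', i.e., a satisfying assignment of that formula \emph{is} a concrete path of $\csa$ whose final configuration covers $\absconf_j$ and hence lies in $\nuparrow ERR$ (the forward direction of Lemma~\ref{lem:gen-cov-pathequiv}). So the core of your proposal is correct and follows the paper's route.

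Where you diverge is the third paragraph, and the sub-argument you sketch there would not go through as stated --- although, fortunately, it is also not needed. The invariant ``every path in $errGraph$ is an honest path of $\acsa$'' is false in general: an edge of $BasisTrans(E')$ may record the basis element $(\vk',\avg')$ as its target even though the actual transition lands in $(\vk'+\vu_j,\avg')$, and the substitutions at Lines~\ref{line:updateMap1}--\ref{line:updateMap2} replace a source by an $\ord$-\emph{smaller} configuration, whose counter at $\ar.\from$ may well be $0$, so $\ar$ need not be enabled there. The tools you cite cannot repair this: the compatibility of Lemma~\ref{lem:abcsWSTS} propagates transitions \emph{upward} along $\ord$, not downward to a smaller source, and Corollary~\ref{cor:compStatesPred} is a statement about targets, not sources. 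What saves soundness is that the only information $Concretize(\bpi)$ extracts from the abstract configurations is the interval membership $\conf_i.\vg[j]\in\absconf_i.\avg[j]$, which together with $\absconf_i.\avg\models\ar_i.\bphi$ yields concrete guard satisfaction --- and the $\avg$-components (hence abstract guard satisfaction at each source) are preserved exactly by all the $\ord$-substitutions, since $\ord$-comparable configurations have identical $\avg$. Counter feasibility ($\conf_i.\vk[r_i.\from]= c_i+\conf_{i+1}.\vk[r_i.\from]$ with $c_i\geq 1$) is imposed by $Concretize$ itself rather than inherited from the abstract path. With paragraph three replaced by this observation (or simply dropped), your proof is correct and coincides with the paper's.
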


With Lemma~\ref{lem:gen-cov-pathequiv}, the following lemma proves the algorithm's completeness.

\begin{restatable}[Completeness]{lemma}{lemConstructCE}\label{lem:ConstructCE}
If $\acsa$ has a non-spurious error path $\bpi$, then Algorithm \ref{alg:ParamCC} will find it.
\end{restatable}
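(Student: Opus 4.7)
The plan is a backward induction along the given non-spurious error path, combined with an analysis of the $visitedTrans$ bookkeeping. Let $\bpi = \absconf_0 \to_{\ar_0} \absconf_1 \to_{\ar_1} \cdots \to_{\ar_{m-1}} \absconf_m$ be the assumed non-spurious error path, with $\absconf_0 \in \absConfs_0$ and $\absconf_m \in {\uparrow}ERR$. I would prove by backward induction on $i$, from $m$ down to $0$, the following invariant: there exists $\absconf_i^{\star} \ord \absconf_i$ that is placed into some $E_{j_i}$ by $\textproc{ComputePredBasis}$, and, for $i > 0$, $visitedTrans$ records at termination a transition labelled $\ar_{i-1}$ from some configuration $\ord$-below $\absconf_{i-1}^{\star}$ into $\absconf_i^{\star}$. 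The base case $i = m$ is immediate: $\absconf_m \in {\uparrow}ERR$ yields some $\absconf_m^{\star} \in ERR = E_0$ with $\absconf_m^{\star} \ord \absconf_m$.

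For the inductive step, assume the invariant at $i + 1$. Since $\ord$ forces $\absconf_{i+1}^{\star}.\avg = \absconf_{i+1}.\avg$, the step $\absconf_i \to_{\ar_i} \absconf_{i+1}$ pulls back by setting $\absconf_i^{\star}.\avg = \absconf_i.\avg$, $\absconf_i^{\star}.\vk[\ar_i.\from] = \absconf_{i+1}^{\star}.\vk[\ar_i.\from] + 1$, $\absconf_i^{\star}.\vk[\ar_i.\to] = \max\{0,\; \absconf_{i+1}^{\star}.\vk[\ar_i.\to] - 1\}$, and $\absconf_i^{\star}.\vk[l] = \absconf_{i+1}^{\star}.\vk[l]$ for all other $l$. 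A component-wise check gives $\absconf_i^{\star} \ord \absconf_i$, and $(\absconf_i^{\star}, \ar_i, \absconf_{i+1}^{\star}) \in BasisTrans(E_{j_{i+1}})$; the second (ghost-transition) branch of $BasisTrans$ is invoked precisely when $\absconf_{i+1}^{\star}.\vk[\ar_i.\to] = 0$. This transition is added to $visitedTrans$ at Line~\ref{line:appendAll}. If $\absconf_i^{\star}$ is subsequently pruned at Lines~\ref{line:updateMap1} or~\ref{line:updateMap2} because it dominates or is dominated by another visited configuration, the accompanying source substitution preserves the invariant with a $\ord$-smaller witness; the WQO property of $\ord$ ensures the chain of such substitutions stabilizes.

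At $i = 0$, the invariant furnishes $\absconf_0^{\star} \ord \absconf_0$ placed into some $E_{j_0}$. The set $\absConfs_0$ is downward-closed under $\ord$ (the initial shared-variable vector $(I_0, \ldots, I_0)$ is fixed, and counters outside $\mI$ must remain at zero), hence $\absconf_0^{\star} \in \absConfs_0$ and thus $\absconf_0^{\star} \in visitedInitialConfigs$ at Line~\ref{line:mc-intersect0}. Following the (possibly substituted) transitions in $visitedTrans$, $\textproc{ConstructErrGraph}$ then contains a simulating path $\bpi^{\star}$ whose rule sequence matches that of $\bpi$ and each of whose configurations is $\ord$-smaller than the corresponding configuration of $\bpi$. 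Because the $\avg$-trajectory and rule sequence coincide, any SMT witness for $Concretize(\bpi) \land Cover(\bpi)$ also satisfies $Concretize(\bpi^{\star}) \land Cover(\bpi^{\star})$: the constraints in $Concretize$ depend only on $\avg$ and the rule sequence (both unchanged), while $Cover(\bpi^{\star})$ is implied by $Cover(\bpi)$ since $\conf_m.\vk \geq \absconf_m.\vk \geq \absconf_m^{\star}.\vk$. Thus $\textproc{CheckforNonSpuriousCEs}$ returns $\bpi^{\star}$ at Line~\ref{line:mc-error}.

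The main obstacle will be verifying that the repeated source substitutions at Lines~\ref{line:updateMap1}/\ref{line:updateMap2} compose into a coherent end-to-end path in $errGraph$: after all replacements, the edges stored in $visitedTrans$ must be chainable through the substituted intermediate vertices, and the concretizability of $\bpi$ must transfer to the substituted path. Corollary~\ref{cor:compStatesPred} is the key alignment tool, as it guarantees that the rules eligible as predecessors of a configuration depend only on its $\ord$-equivalence class.
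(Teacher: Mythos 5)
Your proposal is correct and follows essentially the same route as the paper's proof: both walk backward along $\bpi$, use the $CPredBasis$/$BasisTrans$ construction of Lemma~\ref{lem:predbasis} to show each configuration of $\bpi$ is covered by some computed basis $E_l$, and rely on the $visitedTrans$ bookkeeping together with Corollary~\ref{cor:compStatesPred} to conclude that a corresponding path survives in $errGraph$. Your direct backward induction with the explicit $\ord$-smaller shadow path $\bpi^{\star}$ and the transfer of satisfiability of $Concretize \land Cover$ is a more careful rendering of the step the paper states by contradiction (where it asserts, somewhat loosely, that $\bpi$ itself is a path in $errGraph$), but it is not a genuinely different argument.
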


\smartpar{Termination}. Since Algorithm \ref{alg:ParamCC} is sound and complete, it will terminate whenever $\cs$ has a path that covers $\Confs_{spec}$.
However, Lines \ref{line:updateMap1} and \ref{line:updateMap2} of the algorithm could create cycles in $errGraph$. 
In the presence of cycles however, the algorithm could compute an infinite sequence of error paths.
Therefore, we proved the following theorem.

\begin{restatable}{theorem}{thParamCCTerm}\label{th:ParamCC-Term}
If $\forall \ar \in \absRules: \ar.\vup \in |\Nat_0|^{|\Shared|} \land \ar.\resets = \emptyset$, then Algorithm \ref{alg:ParamCC} terminates.
\end{restatable}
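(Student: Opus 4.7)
The plan is to decompose the termination argument into two parts: termination of the backward fixed-point computation in the main while loop, and termination of the spuriousness check in \textproc{CheckforNonSpuriousCEs}.

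First, I would show that the while loop (Lines \ref{line:mc-while}--7) terminates. By Lemma \ref{lem:abcsWSTS}, $(\acsa, \ord)$ is a WSTS with $\ord$ an anti-symmetric wqo, and by Lemma \ref{lem:predbasis} it has effective pred-basis. The loop computes successive sets $E_i$, and $\textproc{ComputePredBasis}$ discards any configuration $\ord$-larger than one seen before. Hence the sequence of upward-closed sets $\nuparrow\!\!\bigcup_{j \leq i} E_j$ is monotonically increasing, and by the wqo property it stabilizes after finitely many steps. Consequently $visitedTrans$ is finite and $errGraph$ is a finite graph.

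Second, I would tackle $\textproc{CheckforNonSpuriousCEs}$, which unfolds $errGraph$ via BFS. A finite graph with cycles has infinitely many paths, so the crux is to show that under the hypothesis ($\ar.\vup \in \Nat_0^{|\Shared|}$ and $\ar.\resets = \emptyset$ for every $\ar$) it suffices to explore acyclic paths, of which there are only finitely many. The key lemma to prove is: if $\bpi = \bpi_1 \cdot C \cdot \bpi_2$ is a non-spurious error path in $\acsa$ with $C$ a cycle beginning and ending at some $\absconf$, then $\bpi_1 \cdot \bpi_2$ is also a non-spurious error path. Given this lemma, any non-spurious counterexample reduces by iterated cycle-elimination to one of length at most $|\absConfs(errGraph)|$, so BFS can stop once all acyclic paths have been explored.

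To prove the key lemma, I would start from a satisfying assignment of $Concretize(\bpi) \land Cover(\bpi)$, which gives a concrete path $\pi$. Since all updates are non-negative and no resets occur, the concrete shared values $\vg$ are monotonically non-decreasing along $\pi$. At the endpoints of the concrete realization of $C$, the counter vectors agree (because the abstract $\vk$ agrees), and the concrete shared vectors $\vg, \vg'$ both lie in the intervals $\absconf.\avg$ with $\vg \leq \vg'$. I would then argue that restarting $\bpi_2$ from the pre-cycle configuration $(\vk, \vg)$ instead of $(\vk, \vg')$ still yields a valid concrete execution with the same counter at the end, so $Cover(\bpi_1 \cdot \bpi_2)$ holds.

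The main obstacle is verifying that every guard encountered along the (restarted) suffix $\bpi_2$ remains satisfied when the concrete shared values drop from $\vg'$ to $\vg$ while staying in the same intervals. I would handle this by a case analysis tied to the interval structure of Definition \ref{def:abstractTA}: upper guards $d > x$ are preserved under any decrease of $x$; lower guards $d \leq x$ are preserved because $\vg[x]$ and $\vg'[x]$ lie in the same interval $\absconf.\avg[x]$, whose lower endpoint is $\geq d$ exactly when $\vg'[x] \geq d$. Propagating this inductively along $\bpi_2$ (using monotonicity of updates to maintain the invariant that the restarted concrete values are $\leq$ the original concrete values within the same abstract intervals) yields the lemma, and the theorem follows.
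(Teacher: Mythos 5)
Your first half (termination of the backward fixed-point computation by the wqo property, hence finiteness of $visitedTrans$ and of $errGraph$) is correct and consistent with what the paper establishes elsewhere; you also correctly identify that the crux is bounding the unfolding of cycles in $errGraph$, and your guard analysis (upper guards preserved under decrease, lower guards determined by the interval) is sound. The gap is in your key lemma and its proof. First, your claim that the concrete counter vectors agree at the endpoints of the concrete realization of $C$ ``because the abstract $\vk$ agrees'' is false: only the \emph{abstract} counters return to the same value, since each abstract transition moves exactly one token, whereas the concretization applies rule $r_i$ some $c_i$ times with possibly different $c_i$ along the cycle, so the cycle can have a nonzero net concrete flow between locations. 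Restarting $\bpi_2$ from the pre-cycle concrete configuration can then drive a counter negative or change the final counters so that $Cover(\bpi_1\cdot\bpi_2)$ fails. Second, your invariant that the restarted shared values remain ``within the same abstract intervals'' is asserted but not justified, and it can fail: deleting the cycle does not change the abstract annotations of $\bpi_2$, yet removing the cycle's increments can push a concrete value below the lower threshold of the interval that $Concretize$ prescribes at a later step.

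More fundamentally, the lemma itself is too strong. A cycle can concretely pump a shared variable (or processes) while its abstract configuration --- and hence its interval vector $\avg$ --- stays fixed, and this pumping can be exactly what makes the suffix concretizable, e.g.\ when an upper guard bounds the number of processes that can feed an increment rule, so that reaching the next interval requires several passes around the cycle. This is why the paper's own argument only reduces an arbitrary unfolding of $C_e$ to a \emph{single} traversal (keeping the cycle inside the checked path $\bpi^e$), using the facts that without decrements and resets $\avg$ is constant along the cycle and a guard never changes status twice; and even then the paper explicitly concedes that this reduction fails when cycles involve upper guards, in which case it falls back to a restricted version of the reachability formula of \cite{esparza2020TAcomplexity}. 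Your reduction all the way down to acyclic paths skips both of these points, so the BFS cut-off you propose is not justified as stated.
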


It is important to note that if $errGraph$ is acyclic, the algorithm is also guaranteed to terminate.
This includes the case when $errGraph$ is empty, i.e., the corresponding $\acs$ has neither spurious nor non-spurious error paths.
The algorithm also terminates if no cycle in $errGraph$ has
decrements or resets.

\section{Reachability via $(0,1)$-Abstraction}
\label{sec:01-abstraction}

Two configurations $\conf, \conf'$ may be comparable with respect to order $\ord$ even if some location $l$ is occupied in $\conf$ while it is not occupied in $\conf'$.
This implies that an algorithm based on upward-closed sets wrt. $\ord$ cannot be used to decide reachability properties.
Furthermore, note that coverability and reachability specifications (Def.~\ref{def:coverability} and \ref{def:reachability}) are agnostic to the precise number of processes in each location, i.e., they require only to distinguish between locations that are occupied by one or more processes and those that are not occupied.

To enable reachability checking and to enhance the performance of our algorithm for (non-general) coverability checking, we introduce a second, similar abstraction of our system model as in \cite{jacobs2022automatic}, where each counter can only assume one of two values:
$1$ to indicate that the location is currently occupied by at least one process; and $0$ to indicate that it is not occupied.

\smartpar{$(0,1)$-configuration.}
Given an abstract threshold automaton  $\absA=(L, \mathcal{I}, \absShared, \Pi,$ $\absRules)$, a \emph{$(0,1)$-configuration} is a tuple $\zconf = (\vkz, \avg)$, where $\vkz \in \Bool^{\card{L}}$, and $\avg$ is defined as before.
That is $\vkz[i]$ indicates the presence ($1$) or absence ($0$) of at least one process at location $i$.

\begin{definition}
A \emph{$(0,1)$-counter system} (or $\zcs$) of $\absA=(L, \mathcal{I}, \absShared, \Pi,\absRules)$, is a \emph{transition system} $\zcsa = (\zConfs, \zConfs_0, \mTz)$, where:
\begin{itemize}
	\item $\zConfs = \Bool^{\card{L}} \times \mD^{|\absShared|}$, is the set of $(0,1)$-configurations
	\item $\zConfs_0 \subseteq \zConfs$ is the set of $(0,1)$-configurations $\zconf$ that satisfy the following:
	\begin{itemize}
		\item $\forall i \in \Shared: \:\: \zconf.\avg[i] = I_0$
		\item $\forall i \in L: \:\: \zconf.\vkz[i]=1 \implies i \in \mathcal{I}$
	\end{itemize}
	\item the transition relation $\mTz$ is the set of transitions $(\zconf, \ar, \zconf')$ with:
	
	\begin{itemize}
		\item $\ar=\{\from,\to,\bvarphi,\vup\} \in \absRules$
		\item $\zconf.\avg \models \ar.\bvarphi$
		\item $\zconf.\vkz[\ar.\from] = 1$ and $(\zconf'.\vkz[\ar.\from] = 0$ or $\zconf'.\vkz[\ar.\from] = 1)$
		\item $\zconf'.\vkz[\ar.\to] = 1$
		\item $\zconf'.\avg = \zconf.\avg \dot{+} \vup$
		\item $\forall x_i \in \ar.\resets: \:\: \zconf'.\avg[i] = I_0$
	\end{itemize}
\end{itemize}
\end{definition}

\smartpar{Paths.} A sequence $\zconf_0, \ar_0, \zconf_1,\ldots,\zconf_{k-1},\ar_{k-1}, \zconf_k$ of alternating $(0,1)$-configur-ations and abstract rules is a \emph{path} of $\zcsa$ if $\forall i < k$ we have $(\zconf_{i},\ar_i,\zconf_{i+1}) \in \mTz$. We denote by $Paths(\zcsa)$ the set of all paths of $\zcsa$.

We say that a $01$-configuration $\zconf$ satisfies a reachability specification $L_{spec}$, denoted $\zconf \models L_{spec}$, if
for all $i \in L_{=0}$, $\zconf.\vk^z[i]=0$, and for all $i \in L_{>0}$, $\zconf.\vk^z[i]>0$.
We say that $\zcsa$ satisfies $L_{spec}$, denoted $\zcsa \models L_{spec}$, if there is a non-spurious path of $\zcsa$ that ends in $\zconf$ with $\zconf \models L_{spec}$.

Together with Lemma~\ref{lem:reach-cov-pathequiv}, the following lemma shows that our $(0,1)$-abstraction is precise for reachability and coverability specifications.
In other words, we have $\zcsa \models L_{spec}$ if and only if $\csa \models L_{spec}$, given that $L_{spec}$ is a reachability or coverability specification.

\begin{restatable}{lemma}{lemZcsacs}\label{lem:zcsacs}
Let $\absA=(L, \mI, \absShared, \Pi,\absRules)$ be an abstract TA, and $L_{spec}$ a reachability or coverability specification. We assume w.l.o.g. that $\mI = \{l_0\}$.%
	
	Then, there exists a path $\zconf_0, \ar_0, \zconf_1, \ldots, \ar_{n-1}, \zconf_n \in Paths(\zcsa)$ such that $\zconf_n \models L_{spec}$ if and only if there exists a path $\absconf_0, \ar_0, \absconf_1, \ldots, \ar_{m-1}, \absconf_m \in Paths(\acsa)$ such that $\absconf_m \models L_{spec}$.
\end{restatable}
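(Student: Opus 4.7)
I would prove the equivalence via two simulation constructions, one per direction.

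For $\acsa \Rightarrow \zcsa$, I would define the projection $\rho(\vk, \avg) = (\vkz, \avg)$ with $\vkz[l] = 1$ iff $\vk[l] > 0$, and show that applying $\rho$ configuration-by-configuration to any path of $\acsa$ yields a valid path of $\zcsa$. The verifications are routine: initial $\acs$ configurations project to initial $\zcs$ configurations; abstract guards refer only to $\avg$, which $\rho$ preserves; after any $\acs$ transition the destination counter is $\geq 1$, so its projection is $1$ as required by the $\zcs$ rule; and the source counter decreases from $\geq 1$ to either $0$ (when it was exactly $1$ before) or $\geq 1$ (when it was $\geq 2$ before), both of which are permitted nondeterministically in $\zcs$. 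Since reachability and coverability specifications test only whether each counter is zero, $L_{spec}$ is preserved by $\rho$.

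For $\zcsa \Rightarrow \acsa$, I would construct an $\acs$ path by induction on the length $n$ of the given $\zcs$ path, using checkpoints $\absconf^\ast_0, \absconf^\ast_1, \ldots, \absconf^\ast_n$ with many intermediate $\acs$ transitions between consecutive checkpoints. I start with $\absconf^\ast_0$ having $M = 2^n$ processes at the unique initial location $l_0$ if $\zconf_0.\vkz[l_0] = 1$, and all counters $0$ otherwise. To simulate a transition $\zconf_i \to_{\ar_i} \zconf_{i+1}$, I apply rule $\ar_i$ in $\acs$ either $\lfloor \absconf^\ast_i.\vk[\ar_i.\from]/2 \rfloor$ times in case~1 (source stays occupied) or $\absconf^\ast_i.\vk[\ar_i.\from]$ times in case~2 (source becomes empty). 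The invariant I would maintain is that at step $i$, every $\zcs$-occupied location carries at least $2^{n-i}$ processes in $\absconf^\ast_i$, every $\zcs$-unoccupied location carries $0$ processes, and $\absconf^\ast_i.\avg = \zconf_i.\avg$. This invariant propagates through both cases, since in case~1 the source retains $\lceil \absconf^\ast_i.\vk[\ar_i.\from]/2 \rceil \geq 2^{n-i-1}$ while the destination gains $\geq 2^{n-i-1}$ processes, and in case~2 the source is emptied while the destination gains $\geq 2^{n-i}$. At the end, $\absconf^\ast_n$ carries exactly the same occupancy pattern as $\zconf_n$, so $\absconf^\ast_n \models L_{spec}$.

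The main obstacle will be coordinating the shared-variable updates: a single $\zcs$ application of $\ar_i$ advances or retreats each interval at most once under the nondeterministic $\dot{+}$, whereas multiple $\acs$ applications could in principle compound these changes. I would resolve the $\acs$ nondeterminism so that at most one of the many applications of $\ar_i$ actually changes any given interval, matching the single $\zcs$ choice; resets, being idempotent, cause no compounding in isolation. A particularly delicate sub-case arises when $\ar_i$ both resets a variable and carries a lower-bound guard on it, since subsequent multi-applications of $\ar_i$ would then be blocked by the very reset; here I would argue by case analysis that either a single application of $\ar_i$ already suffices (when the $\acs$ source counter in the ongoing construction is $1$), or else the required occupancy pattern at $\absconf^\ast_n$ can be reached via an alternative sequence of $\acs$ rules, leveraging the flexibility that the lemma only requires \emph{some} $\acs$ path rather than one that mirrors the $\zcs$ path step-by-step.
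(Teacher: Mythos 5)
Your proposal is correct and takes essentially the same route as the paper's proof: the $\acsa\Rightarrow\zcsa$ direction is the same componentwise $(0,1)$-projection, and the $\zcsa\Rightarrow\acsa$ direction is the same construction that seeds the initial location with exponentially many processes (the paper uses $2^b$ where $b$ counts only the source-stays-occupied transitions, you use the cruder $2^n$), halves the source counter at each ``keep'' step, empties it otherwise, and defers the shared-variable interval change to a single application of each repeated rule. Your final paragraph on the reset-plus-lower-guard sub-case actually probes a subtlety the paper's proof passes over in silence, though your resolution of it is only sketched.
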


\subsection{Parameterized Reachability Algorithm (PRA)}
\label{sec:checking-reachability}

Our algorithm PRA for solving the parameterized reachability problem accepts two input parameters: a $01$-counter system $\zcsa$ and a finite set of error configurations $ERR \subseteq \zConfs$.
It outputs either ``The system is safe'' to indicate that no $(0,1)$-configuration in $ERR$ is reachable, or an error path of $\zcsa$.
PRA~(see App.~\ref{apx:ReachAlg}) is very similar to Algorithm \ref{alg:ParamCC}, differing primarily in the approach to computing the predecessor set. Instead of checking if two configurations are comparable, we look for equality.
Moreover, in contrast to Algorithm \ref{alg:ParamCC}, checking whether an error path $\zpi$ in $\zcsa$ is spurious is based on constraint $Reach(\zpi)$ instead of $Cover(\zpi)$, as described in Section \ref{sec:spurious}.

Just like Algorithm \ref{alg:ParamCC}, we have demonstrated soundness and completeness of PRA in a general context, along with termination, subject to the restrictions outlined in Theorem \ref{th:ParamCC-Term}~(see App.~\ref{apx:ReachAlg}).
Moreover, with minimal adjustments to the reachability algorithm, we can also check coverability within a $\zcs$~(see App.~\ref{apx:zcscover}).

\section{Implementation and Experimental Evaluation}
\label{sec:experiments}

We implemented Algorithm \ref{alg:ParamCC} with explicit, unbounded integer counters and PRA symbolically, leveraging CUDD Decision Diagrams \cite{somenzi1998cudd} as BDD package. 
Both use Z3 \cite{z3} as SMT solver back-end.
We evaluated our implementations on an 
AMD Ryzen 7 5800X CPU, running at 3.8GHz with 32GiB memory of system memory. 
For comparisons with ByMC~\cite{konnov2018bymc}, the state-of-the-art model checker for threshold automata, we executed the tool in the VM provided by the authors on the same machine, with 6 out of 8 cores and 25GiB of the total system memory. To ensure an equal environment, we also executed our tool in a virtual machine with the same restrictions.\footnote{The benchmark files and a container image with our tool are available on Zenodo~\cite{artifact}.}

\begin{wraptable}[22]{r}{.5\textwidth}
\vspace{-2em}
    \caption{Comparison of execution time (in s) between ByMC \cite{konnov2018bymc} and our symbolic implementation of PRA. In the Prop. column, $A$ denotes agreement, V denotes validity, and U denotes unforgeability. \textit{TO} denotes a timeout after 1.5h.}
    \label[table]{tab:benchmark-bymc-vs-impl}
\resizebox{\linewidth}{!}{
    \begin{tabular}{  l  l r  r }
        \toprule
            \textbf{Benchmark~~~} & \textbf{Prop.} & \textbf{~ByMC} & \textbf{~~PRA} \\
        \midrule
            \textbf{aba}   & U & 0.52   & \textbf{0.12} \\
            \textbf{bcrb}  & U & 0.30   & \textbf{0.06} \\
            \textbf{bosco} & V & 42.22  & \textbf{0.73} \\
            \textbf{c1cs}  & V & \textit{\textbf{TO}} & \textbf{0.23} \\
            \textbf{cc}    & A,V & 0.24  & 0.24 \\
            \textbf{cf1s}  & V & 380.81 & \textbf{0.42} \\
            \textbf{frb}   & U & 0.19   & \textbf{0.07} \\
            \textbf{nbacg} & A & \textbf{0.18}   & 0.36 \\
            \textbf{nbacr} & V & 0.19   & \textbf{0.14} \\
            \textbf{strb}  & U & 0.17   & \textbf{0.12} \\
            \textbf{RB-bc} & A,V & \textbf{0.32} & 0.49 \\
    			\textbf{RB} & V & \textit{\textbf{TO}} & 1.03 \\
    			\textbf{RB-Simpl} & V & \textit{\textbf{TO}} & 61.77 \\
        \bottomrule
    \end{tabular}
		}
\end{wraptable}
As benchmarks in our decidable fragment, we used the following threshold-based algorithms from the literature~\cite{faulttolerantbenchmarks}: 
folklore reliable broadcast (frb)~\cite{frb}, one-step consensus with zero-degradation (cf1s)~\cite{cf1s}, consensus in one communication step (c1cs) \cite{c1cs}, consistent broadcast (strb), asynchronous byzantine agreement (aba) \cite{aba}, non-blocking atomic commit~(nbacr~\cite{nbacr} and nbacg~\cite{nbacg}), condition-based consensus (cc)~\cite{cc}, and byzantine one step consensus (bosco)~\cite{bosco}.
Moreover, we considered the following parts of the Red Belly blockchain~\cite{redbellydbft}, which have already been modeled as TA in \cite{BGKLTW22}:
the broadcast protocol~(RB-bc), the one-round consensus protocol~(RB), and a simplified one-round consensus protocol~(RB-Simpl).

We verified interesting safety specifications: \emph{agreement}~(consistent decisions among correct processes), \emph{validity}~(the value that has been decided must have been proposed by some process ), and \emph{unforgeability}~(if all correct processes have an initial value of 0, then no correct process ever accepts).

\cref{tab:benchmark-bymc-vs-impl} compares execution times of ByMC and our implementation of PRA. It shows that PRA significantly outperforms ByMC on all benchmarks except two.
Except for Red Belly protocols, our non-symbolic implementation can verify all the benchmarks above in less than 2 minutes.

For the undecidable fragment, we used our extended model of TA to model the following FTDAs from the literature: reliable broadcast~\cite{relbrd}, k-set agreement~\cite{floodmin}, multi-round simplified Red Belly blockchain consensus~\cite{redbellydbft}, multi-round full Red Belly blockchain consensus~\cite{redbellydbft}, and phase king consensus~\cite{king}.
Note that, in \cite{BGKLTW22}, only the one-round protocols were modeled, and only the simple version was verified.
In contrast, we can model the multi-round versions, and were able to verify all (single- and multi-round) versions except one.

\begin{wraptable}[14]{r}{.34\textwidth}
\vspace{-1.5em}
    \caption{Execution time (in s) for PRA on multi-round protocols. \textit{TO} denotes a timeout after 2h.}
    \label[table]{tab:benchmark-our-impl}
\resizebox{\linewidth}{!}{
    \begin{tabular}{  l  r  }
        \toprule
            \textbf{Benchmark~~~} & \textbf{PRA} \\
        \midrule
           
		    \textbf{multiR-floodMin} & 0.3 \\
		    \textbf{multiR-RelBrd} & 0.1 \\
		    \textbf{multiR-RB-Simpl} & 6859\\
	    		\textbf{multiR-RB} & \textit{TO} \\
		    \textbf{phase-king-buggy}  & 120\\
    			\textbf{phase-king-partial} & 300\\
        \bottomrule
    \end{tabular}
		}
\end{wraptable}
For the aforementioned TAs, we conducted our experiments on a machine
with  2x AMD EPYC 7773x - 128 Cores, 256 Threads and 2TB RAM.
\cref{tab:benchmark-our-impl} presents execution for running our symbolic implementation on all the aforementioned multi-round protocols. For the phase-king protocol, PRA was able to locate a bug in an incorrect model of the algorithm within $2$ minutes, and was also able to prove partial correctness properties, for example that consensus is actually reachable, in around 5 minutes. 
We verified \emph{validity} for all the remaining benchmarks, and for floodmin we verified additionally \emph{agreement}.
During benchmarking, memory usage peaked at 73 GB.

\section{Related work} 
After presenting an approach for verifying FTDAs based on abstraction~\cite{john2013parameterized}, 
Konnov et al. \cite{konnov2017short,k17completeness,konnov20172,konnov2018bymc}
developed several approaches and algorithms specifically tailored for verifying the safety and liveness properties of threshold automata.
Starting from an acyclic CFA~(control flow automaton), they construct in \cite{john2013parameterized} a finite counter system using parametric interval abstraction for variables and counters.
To refine the abstraction, they detect spurious transitions using ordinary model checking and user-defined invariants.
Subsequent works~\cite{k17completeness, konnov20172} improve on the efficiency when verifying reachability properties in TAs.
In~\cite{k17completeness}, they showed the existence of an upper bound on the distance between states within counter systems of TAs, hence, demonstrating the completeness of bounded model checking.
In~\cite{konnov20172}, the authors use partial order reduction to generate a finite set of sequences comprising sets of guards and rules. Each of these sequences represents a possibly infinite set of error traces. An SMT solver is employed to validate the existence of concrete error traces.
Extending
~\cite{konnov20172}, Konnov et al. presented an approach capable of detecting lasso-shaped traces that violate a given liveness property~\cite{konnov2017short}.
The latter two approaches have been implemented in the tool ByMC \cite{konnov2018bymc}.
To enable the functionality of all aforementioned approaches, the authors found it necessary to impose constraints on threshold automata. This involved explicitly prohibiting cycles and variable decrements.

A verification tool for parameterized distributed algorithms has been introduced in~\cite{thomas2023pylta}. The tool relies on layered threshold automata~\cite{bertrand2021guard} as a system model, which can be seen as an infinitely repeating threshold automata. However, the tool requires users to identify layers~(rounds) in the model, their sequence~(infinite interleaving or lasso-shaped sequences), and to provide  predicates.

Decidability and the complexity of verification and synthesis of threshold automata have also been studied in~\cite{esparza2020TAcomplexity}.
Their decision procedure is based on an SMT encoding of potential error paths, where in general the size of the SMT formula grows exponentially with the length of the paths.
While having achieved good results with some heuristics that avoid this exponential blow-up in practice, we believe that these heuristics would not work for threshold automata with decrements and/or resets\footnote{We could not verify this conjecture since their code is not publicly available.}.
Moreover, their method requires a bound on the number of changes in the valuation of thresholds. However, such a constraint does not apply in the presence of decrements and/or resets.

\section{Conclusion}
\label{sec:conclusion}
In this paper, we have introduced an extension of the computational model known as threshold automata, to support decrements and resets of shared variables.
This extension in general comes at the cost of decidability, even for simple state-reachability properties.
We developed a semi-decision procedure for this extended notion of TA, supporting not only the simple coverability properties from the TA literature, but also general coverability properties as known from Petri nets or well-structured systems.
To support also reachability properties, we presented an additional abstraction, called $(0,1)$-abstraction, which is the basis for a semi-decision procedure for reachability properties of extended TAs.

We have implemented our techniques and evaluated them on examples from the literature, and on several round-based algorithms that cannot be modeled with canonical TAs.
We show that our semi-decision procedure can find bugs in a faulty protocol and prove correctness of protocols, even outside the known decidable fragment.
Moreover, on a set of benchmarks in the decidable fragment, it matches or outperforms the TA model checker ByMC~\cite{konnov2018bymc}.
\begin{credits}
\subsubsection{\ackname} T. Baumeister and P. Eichler carried out this work as members of the Saarbr\"ucken Graduate School of Computer Science.
This research was funded in whole or in part by the German Research Foundation (DFG) grant 513487900 and the Luxembourg National Research Fund (FNR) grant C22/IS/17432184. For the purpose of open access, and in fulfilment of the obligations arising from the grant agreement, the author has applied a Creative Commons Attribution 4.0 International (CC BY 4.0) license to any Author Accepted Manuscript version arising from this submission.
\end{credits}

\bibliographystyle{splncs04}
\bibliography{FTDA}
\newpage
\appendix
\section{Proofs}\label{apx:proofs}

\lemGenCovPathequiv*
\begin{proof}
$\Rightarrow$:
Let $\bpi$ be a non-spurious path of $\acsa$ that covers $\conf_{spec} \in \Confs_{spec}$. Then $Concretize(\bpi) \land Cover(\bpi)$ is satisfiable and its satisfying assignment is a path in $\csa$ that covers $\conf_{spec}$.
\begin{itemize}
	
	\item $
	RC \land \textstyle\sum_{j \in \mI } \conf_0.\vk[j] = N(n,t,f)  $, makes sure that we have admissible parameters and we start from a valid initial configuration.
	\item $\forall i < m \; \conf_{i}.\vk[r_i.\from] = c_i + \conf_{i+1}.\vk[r_i.\from] \land
	\conf_{i+1}.\vk[r_i.\to] = c_i + \conf_{i}.\vk[r_i.\to] $, ensures that, for each local transition, exactly one process makes a move.
	\item $ \forall i < m \; \forall x_j \in \Shared \;
	x_j \notin r_i.\tau \implies
	\conf_{i+1}.\vg[j] = \conf_{i}.\vg[j] + c_i \cdot r_i.\vup[j]$, guarantees correct updates for all the variables that haven't been reset.
	\item $\forall i < m \; \forall x_j \in r_i.\resets \; \conf_{i+1}.\vg[j] = 0$, resets all the variables in $r_i.\resets$.
	\item $\forall i < m \;
	\forall x_j \in \Shared \;  \conf_{i}.\vg[j] \in \absconf_{i}.\avg[j] \land \conf_{i+1}.\vg[j] \in \absconf_{i+1}.\avg[j]
	$
	\item[] $\land$ $\forall i < m \; c_i > 1 \implies ((\conf_{i+1}.\vg - r_i.\vup) \models r_i.\varphi)$, ensures that a guard is satisfied before it is executed.
\end{itemize}

$\Leftarrow$:
Let $\pi$ be a path of $\csa$ that covers $\conf_{spec} \in \Confs_{spec}$.
Let us construct 
$\bpi=\absconf_0,\ar_0,\ldots,\ar_{m-1},\absconf_m$
from $\pi$ by replacing concrete rules and configurations with their corresponding abstract versions, and by then replacing any sub-sequence of
repeating abstract configurations and rules
$\absconf, \ar, \absconf, \ar, \absconf$
with $\absconf, \ar, \absconf$.
Then $\absconf_0 \in \absConfs_0$, $\forall i \: \conf_{spec}.\vk[i] \leq \absconf_m.\vk[i]$, and $\bpi$ is a path of $\acsa$.

\qed
\end{proof}

\lemAbcsWSTS*
\begin{proof}
The order $\ord$ is reflexive and transitive, and it has no infinite antichain due to the fact that $\leq$ is a wqo.
To see that $\ord$ is strongly compatible with $\Trans$,
let $\transition{(\vk_1,\avg_1)}{\ar}{(\vk'_1,\avg'_1)}$, and let  $(\vk_1,\avg_1)\ord(\vk_2,\avg_2)$.
Then we have $\transition{(\vk_2,\avg_2)}{\ar}{(\vk'_2,\avg'_2)}$ due to the fact that $\avg_1 = \avg_2 \land \vk_1 \leq \vk_2$.
We also have $(\vk'_1,\avg'_1)\ord(\vk'_2,\avg'_2)$ by resolving the non-determinism for obtaining $\avg_2'$ in the same way as in the transition $\transition{(\vk_1,\avg_1)}{\ar}{(\vk'_1,\avg'_1)}$~(i.e., whether to change an interval or not).
\qed
\end{proof}

\lemPredbasis*
\begin{proof}
	Let $E' \subseteq \absConfs$ be finite.
Note that $pred(\nuparrow E')$ is upward-closed due to the strong compatibility of $\ord$ established in the proof of Lemma~\ref{lem:abcsWSTS}, and therefore it has a finite minimal basis, denoted $minBasis(pred(\nuparrow E'))$.
 
Let $\vu_i$ be the unit vector with $\vu_i(i) = 1$ and $\vu_i(j) = 0$ for $j \neq i$. Consider the set $CPredBasis(E')$, defined as
\[ 
\left\{
\begin{array}{l|l}
	(\vk,\avg) \in \absConfs~ 												& ~\exists (\vk',\avg') \in E', \exists \ar = (l_i,l_j,\bphi,\vup, \resets) \in \absRules \text{ such that } \\
 
													& ~~\left[ \transition{(\vk,\avg)}{\ar}{(\vk',\avg')} \lor   \left( \transition{(\vk,\avg)}{\ar}{(\vk' + \vu_j,\avg')} \land \vk'[j] = 0 \right) \right]
\end{array} \right\}.
\]

First, note that we have $CPredBasis(E') \subseteq pred(\nuparrow E')$, which follows by definition of $CPredBasis(E')$ and the fact that, if $(\vk',\avg') \in E'$, then $(\vk' + \vu_j,\avg') \in \nuparrow E'$.
To see that $CPredBasis(E') \supseteq minBasis(pred(\nuparrow E'))$, assume towards a contradiction that it is not.
Then, there exist $(\vk_1,\avg_1) \in (minBasis(pred(\nuparrow E')) \cap \neg CPredBasis(E'))$, $(\vk'_1,\avg'_1) \in \nuparrow E'$,  and $(\vk'_2,\avg'_2) \in E'$ with $\transition{(\vk_1,\avg_1)}{\ar}{(\vk'_1,\avg'_1)}$ and $(\vk'_2,\avg'_2) \ord (\vk'_1,\avg'_1)$.
We differentiate between two cases:
\begin{itemize}
\item Case $\vk'_2[j] = 0$: Let $\vk_2 = \vk'_2 + \vu_i$, and $\avg_2 = \avg_1$.
Then $\avg_2 \models \ar.\bphi$, and $((\vk_2,\avg_2),\ar,(\vk'_2+\vu_j,\avg'_2)) \in \absRules$.
Hence $(\vk_2,\avg_2) \in CPredBasis(E')$, and $(\vk_2,\avg_2) \ord (\vk_1,\avg_1)$.
Contradiction.
\item Case $\vk'_2[j] > 0$: Let $\vk_2 = \vk'_2 + \vu_i - \vu_j$, and $\avg_2 = \avg_1$.
Then $\avg_2 \models \ar.\bphi$, and $((\vk_2,\avg_2),\ar,(\vk'_2,\avg'_2)) \in \absRules$.
Hence $(\vk_2,\avg_2) \in CPredBasis(E')$, and $(\vk_2,\avg_2) \ord (\vk_1,\avg_1)$.
Contradiction.	\qed
\end{itemize}
\end{proof}

\thParamCCTerm*
\smartpar{Proof sketch.}
Suppose we have $\forall \ar \in \absRules: \ar.\vup \in |\Nat_0^{|\Shared|} \land \ar.\resets = \emptyset$.
In this case, once a rule changes its status~(enabled/disabled), it will retain it forever.
Suppose $\bpi^e = \absconf^e_0, \ar_0,\ldots,\ar_{m-1}, \absconf^e_m$ is a path in $errGraph$ and let  $C_e = \absconf^e_i,\ar_i,\ldots,\ar_j, \absconf^e_j$ be a sub-sequence of $\bpi^e$ with $\absconf^e_i = \absconf^e_j$.
Note that a cycle $\absconf^e_i,\ar_i,\ldots,\ar_j, \absconf^e_i$ in $errGraph$ does not necessarily correspond to a concrete cycle.
It just indicates that the sequence of transitions $r_i,\ldots,r_j$ may be executed an arbitrary number of times. 
In the general case, the algorithm checks if $\bpi^e$ is spurious~(see Section \ref{sec:spurious}) and whether paths, formed by unfolding $C_e$ in $\bpi^e$ an arbitrary number of times, are spurious.
However, in the absence of shared variable decrements and resets, it suffices to check $\bpi^e$. This is because, regardless of how we unfold the cycle $C_e$, the initial configuration, the final configuration, and all intermediate configurations will consistently have the same vector $\avg$.
Please note that if the TA contains cycles with upper guards, then while only a finite number of processes may traverse it, they can do so an unbounded number of times. Therefore, the strategy outlined above may not suffice in such cases.
Consequently, in such a case, we need to use a simplified variant of the reachability formula used in \cite{esparza2020TAcomplexity}.
This formual would be relatively concise, since it uses only the transitions that appear on the cycle of the error path and leverages the total order on the transitions of the TA cycle.

\qed
It's worth noting that none of the TA benchmarks in Table \ref{tab:benchmark-bymc-vs-impl} have cycles.
Furthermore, all the computed error graphs for the benchmarks in \ref{tab:benchmark-our-impl} were empty.

\lemZcsacs*
\begin{proof}
$\Rightarrow$:
	
	Suppose there exists $\zpi = \zconf_0, \ar_0, \zconf_1, \ldots, \ar_{n-1}, \zconf_n \in Paths(\zcsa)$ such that $\zconf_n \models L_{spec}$. 
	Let $b$ be the number of transitions $(\zconf_k,\ar_k,\zconf_{k+1})$ in $\zpi$ with $\zconf_{k+1}.\vkz[\ar.\from] = 1$, i.e., the transitions where we keep a $1$ in location $\ar.\from$.
	 Then we construct the corresponding path $\bpi=\absconf_0,\ar_0,\ldots,\ar_{m-1},\absconf_m$ as follows:
	\begin{enumerate}
	\item Let $\absconf_0$ be the initial configuration in $\bpi$ with $\absconf_0.\vk[l_0] = 2^b$.
	\item For every transition $(\zconf_k,\ar_k,\zconf_{k+1})$ in $\zpi$:
	\begin{itemize}
	\item if $\zconf_{k+1}.\vkz[\ar_k.\from] = 0$, then we obtain $\absconf_{k+1}$ from $\absconf_k$ by executing $\ar_k$ until location $\ar_k.\from$ becomes empty. If $\zconf_{k+1}.\avg \neq \zconf_k.\avg$, then in $\bpi$ we only change the valuation of shared variables $\avg$ with the last execution of $\ar_k$.
	\item if $\zconf_{k+1}.\vkz[\ar_k.\from] = 1$, then we execute $\ar_k$ $\frac{c_k}{2}$ times, where $c_k$ is the number of processes that are in location $\ar_k.\from$
          (i.e., we move half of the processes to $\ar_k.\to$, and keep the other half in $\ar_k.\from$).
	As before, we change $\avg$ only with the last execution of $\ar_k$.
	Since we started with $2^b$ processes in the initial configuration, this is possible $b$ times, which is exactly the number of such transitions in the path.
	\end{itemize}
	\end{enumerate}
By construction of $\bpi$, the same locations in $\zconf_n$ and $\absconf_m$ are occupied, and hence, $\absconf_m \models L_{spec}$.
	
	$\Leftarrow$: 
	
	Suppose there is $\absconf_0, \ar_0, \absconf_1, \ldots \ar_{m-1} \absconf_m \in Paths(\acsa)$ such that $\absconf_m \models L_{spec}$. Then the corresponding path of $\zcsa$ is obtained by simply replacing each configuration $\absconf_j$ with a $(0,1)$-configuration $\zconf_j$ such that for each location $i$, $\absconf_j.\vk[i] \geq 1 \iff \zconf_j.\vkz[i] = 1$ and $\absconf_j.\vk[i] = 0 \iff \zconf_j.\vkz[i] = 0$. 
	The result is a valid path of $\zcsa$ with $\zconf_m \models L_{spec}$.
	\qed	
\end{proof}

\lemConstructCE*
\begin{proof}
	
	Since $\ord$ is a wqo, any infinite increasing sequence of upward-closed sets eventually stabilizes~\cite{finkel2001well}.
	Therefore, Algorithm \ref{alg:ParamCC} is guaranteed to converge to a fixed-point.
	Note that for increasing $j$, the sequence $\bigcup_{0 \leq i \leq j} E_i$, which is based on the sets $E_i$ computed by Algorithm~\ref{alg:ParamCC}, is an increasing sequence of upward-closed sets. Therefore, the algorithm will converge to a fixed-point.
	
	In order to reach a contradiction, suppose the algorithm reached a fixed-point without computing any non-spurious error path, and let $E_i,\ldots, E_0=\Confs_{spec}$ be the computed sets of predecessor bases.
	Let $\bpi=\absconf_0,\ar_0,\ldots,\ar_{j-1},\absconf_j$. Since $\bpi$ is an error path that was not computed by Algorithm \ref{alg:ParamCC}, then $\absconf_j \in {\uparrow}E_0$ and there exists $i>0$ where $\forall k > i \; 	\forall l \leq i: \: \absconf_{j-k} \not\in {\uparrow}E_l$, i.e. $\{ \absconf_0,\ldots,\absconf_{j -(i + 1)}\} \cap ({\uparrow}E_i \cup \ldots \cup {\uparrow}E_0) = \emptyset$.
	Due to procedure $\textproc{ComputePredBasis}$ we have $\forall k \leq i: \: \absconf_{j-k} \in E_l$ with $l \leq k$.
	We assume that $i < j$ since other cases are trivial.
	Executing the procedure $\textproc{ComputePredBasis}(E_i, visitedTrans)$~(based on Lemma \ref{lem:predbasis}) must have resulted in an empty set.
	Then, $\absconf_{j-(i+1)} \in E_l$ with $l \leq i$.%
	Furthermore, the algorithm guarantees also that the transition $(\absconf_{j-(i+1)},\ar_{j-(i+1)},\absconf_{j-i}) \in visitedTrans$ is computed~(check Line \ref{line:appendAll}).
	Therefore, due to Corollary \ref{cor:compStatesPred}, $\bpi$ is a path in $errGraph$.
	Contradiction.
	\qed
\end{proof}

\section{Parameterized Reachability Algorithm}\label{apx:ReachAlg}

This section introduces our algorithm for solving the parameterized reachability problem (see Definition \ref{def:reachability}).
Algorithm \ref{alg:ParamMC} accepts two input parameters: a $01$-counter system $\zcsa$ and a finite set of error configurations $ERR \subseteq \zConfs$.
It outputs either ``The system is safe'' to indicate that no $(0,1)$-configuration in $ERR$ is reachable, or an error path of $\zcsa$.

Algorithm \ref{alg:ParamMC} is very similar to Algorithm \ref{alg:ParamCC}, differing primarily in the method for computing the predecessor set.
As before, error paths may be spurious, and the algorithm needs to keep track of all possible paths on which a configuration may be reached.
In contrast to Algorithm \ref{alg:ParamCC}, checking whether error path $\zpi$ in $\zcsa$ is spurious is based on constraint $Reach(\zpi)$ instead of $Cover(\zpi)$, as described in Section \ref{sec:spurious}.

\begin{algorithm}[h]
	\caption{Parameterized Reachability Checking (PRA)}\label{alg:ParamMC}
	
	\begin{algorithmic}[1]		
		\Procedure{CheckReachability}{\emph{$01$-Counter System} $\zcs$, $ERR$}\par
		 \hskip\algorithmicindent{\color{brown} //same approach as $\textproc{CheckCoverability}$ in Algorithm \ref{alg:ParamCC} but
		instead of\par
		 \hskip\algorithmicindent //$\textproc{ComputePredBasis}$ we have $\textproc{ComputePred}$}
		\EndProcedure
		\Procedure{ComputePred}{$E_{i-1}$,$visitedTrans$}
		\State $TrPred(E_{i-1}) \gets \{(\zconf,\ar,\zconf') \in \mT^z \mid \zconf' \in E_{i-1}\}$\label{line-rc:TrPred}
		\State $E_i= \{\zconf \mid \exists (\zconf,\ar,\zconf') \in TrPred(E_{i-1})\}\label{line-rc:Ei}$%
		\State $visitedTrans.appendAll(TrPred(E_{i-1}))$ \label{line-rc:appendAll}
		\State $finalE_i \gets \emptyset$
		\ForAll{$\zconf \in E_i$} {\color{brown} //remove visited equal configs to ensure termination}
		\If{$\exists j \leq i, \zconf_e \in E_j$ s.t. $\zconf_e = \zconf$}
		\State $visitedTrans(\zconf_e).append(\{(\zconf_e,\ar,\zconf') \mid (\zconf,\ar,\zconf') \in  TrPred(E_{i-1}) \})$\label{line-rc:updateMap1}		
		\Else
		\State $finalE_i.append(\zconf)$
		\EndIf
		\EndFor
		\State \textbf{return} $finalE_i,visitedTrans$  
		\EndProcedure
	\end{algorithmic}	
\end{algorithm}

Procedure $\textproc{ComputePred}$ takes as argument the set $E_{i-1}$, and the set of visited transitions $visitedTrans$.
We begin by obtaining the set of predecessors, denoted as $E_i$, from the incoming transitions of the configurations in $E_{i-1}$~(Lines \ref{line-rc:TrPred}-\ref{line-rc:Ei}).
The set $E_i$ may contain configurations that are equal to configurations that have been already explored.
Hence, after adding all computed transitions to $visitedTrans$~(Line \ref{line-rc:appendAll}), we check if there is a configuration $\zconf_e$ with $\zconf_e = \zconf$.
If such a configuration is found, we add $\zconf$'s transitions to $\zconf_e$ (Line \ref{line-rc:updateMap1}).

Combining Lemmas \ref{lem:ConstructCE} and \ref{lem:zcsacs} yields the following corollary:

\begin{corollary}[Soundness, Completeness]
	Algorithm \ref{alg:ParamMC} is sound and complete, i.e., Algorithm \ref{alg:ParamMC} computes a non-spurious error path if and only if $\csa$ includes a reachable configuration that violates a given reachability specification.
\end{corollary}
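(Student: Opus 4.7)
I would split the biconditional and reduce each direction to results already established in the excerpt: Lemmas \ref{lem:zcsacs} and \ref{lem:reach-cov-pathequiv} for the concrete/abstract/$(0,1)$ equivalences, and the argument of Lemma \ref{lem:ConstructCE} adapted to the $(0,1)$-setting for the algorithmic discovery part.

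For soundness, suppose PRA returns a non-spurious error path $\zpi = \zconf_0, \ar_0, \ldots, \ar_{n-1}, \zconf_n$ of $\zcsa$ with $\zconf_n \models L_{spec}$. By definition of non-spuriousness, the SMT constraint $Concretize(\zpi) \land Reach(\zpi)$ from Section \ref{sec:spurious} is satisfiable; a model is a $\csa$-path whose final configuration occupies exactly the locations marked in $\zconf_n$, and hence satisfies $L_{spec}$. Equivalently, I could first lift $\zpi$ to an $\acsa$-path via the forward direction of Lemma \ref{lem:zcsacs} and then apply Lemma \ref{lem:reach-cov-pathequiv} to obtain a concrete witness in $\csa$.

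For completeness, let $\csa$ contain a path reaching a configuration that satisfies $L_{spec}$. Lemma \ref{lem:reach-cov-pathequiv} yields a non-spurious $\acsa$-path with the same property, and the backward direction of Lemma \ref{lem:zcsacs} projects it to a $\zcsa$-path $\zpi$ whose final $(0,1)$-configuration satisfies $L_{spec}$; non-spuriousness of $\zpi$ follows because the $\acsa$-witness already supplies a concretization. It remains to show PRA actually enumerates $\zpi$, which is the analog of Lemma \ref{lem:ConstructCE} for Algorithm \ref{alg:ParamMC}. I would argue by contradiction: since the state space of $\zcsa$ is finite up to equality, the main loop of PRA reaches a fixed-point; if $\zpi$ were missed, choose the largest index $i$ such that every $\zconf_{n-k}$ with $k \le i$ appears in some computed $E_l$ but $\zconf_{n-(i+1)}$ never does. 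Because $\textproc{ComputePred}$ puts every incoming transition of $E_{i-1}$ into $TrPred(E_{i-1})$, the transition $(\zconf_{n-(i+1)}, \ar_{n-(i+1)}, \zconf_{n-i})$ is recorded in $visitedTrans$ and $\zconf_{n-(i+1)}$ is added to $E_i$, contradicting the maximality of $i$. Hence $\zpi$ appears in $errGraph$, and the BFS unfolding in $\textproc{CheckforNonSpuriousCEs}$ eventually returns it.

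The step I expect to be most delicate is justifying the equality-based bookkeeping at Line \ref{line-rc:updateMap1}: when a newly computed predecessor $\zconf$ equals a previously visited $\zconf_e$, its outgoing transitions are attached to $\zconf_e$ instead of being explored again. I would show this is sound because two equal $(0,1)$-configurations have identical successor behavior in $\zcsa$, so every error path passing through $\zconf$ can be faithfully replayed through $\zconf_e$; this is also the precise reason why equality rather than $\ord$ is the correct comparison for reachability specifications, which distinguish merely between occupied and unoccupied locations. With this invariant in hand, the adapted completeness argument closes, and chaining it with Lemmas \ref{lem:zcsacs} and \ref{lem:reach-cov-pathequiv} together with the soundness direction yields the corollary.
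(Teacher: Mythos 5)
Your proposal is correct and follows essentially the same route as the paper, which simply combines Lemma~\ref{lem:ConstructCE} (adapted to the $(0,1)$-setting), Lemma~\ref{lem:zcsacs}, and Lemma~\ref{lem:reach-cov-pathequiv} with the soundness-by-construction of the $Concretize(\zpi) \land Reach(\zpi)$ constraint. The extra detail you supply on the fixed-point argument and the equality-based bookkeeping at Line~\ref{line-rc:updateMap1} is consistent with what the paper leaves implicit.
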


\smartpar{Termination}. Unfortunately, Algorithm \ref{alg:ParamMC} may compute an error graph with cycles ($errGraph$), and checking for non-spurious error paths may not terminate because there may be infinitely many paths to check.

\begin{theorem}\label{th:ParamMC-Term}
	If $\forall \ar \in \absRules: \ar.\vup \in |\Nat_0|^{|\Shared|} \land \ar.\resets = \emptyset$, then Algorithm \ref{alg:ParamMC} terminates.
\end{theorem}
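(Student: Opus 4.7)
The plan is to mirror the argument for Theorem~\ref{th:ParamCC-Term}, but accounting for the two differences in Algorithm~\ref{alg:ParamMC}: the $(0,1)$-abstraction of counters and the equality-based merging used in \textproc{ComputePred}. There are two potential sources of non-termination, and I would handle them separately: (i) the main fixed-point loop, and (ii) the enumeration of error paths during spuriousness checking.

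First I would argue that the outer while loop always terminates, in fact without needing the restriction. Since a $(0,1)$-counter system lives in the finite state space $\zConfs = \Bool^{|L|} \times \mD^{|\absShared|}$, and the equality test in \textproc{ComputePred} (Line~\ref{line-rc:updateMap1}) guarantees that a configuration is inserted into at most one $finalE_i$, the monotonically growing union $\bigcup_{j} finalE_j$ sits inside a finite set and must saturate in finitely many steps. Hence $E_{i-1} = \emptyset$ is eventually reached.

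The main obstacle, exactly as in Theorem~\ref{th:ParamCC-Term}, is termination of $\textproc{CheckforNonSpuriousCEs}$ in the presence of cycles in $errGraph$, since each cycle could in principle be unfolded unboundedly. Here I would exploit the assumption $\ar.\vup \in \Nat_0^{|\Shared|}$ and $\ar.\resets = \emptyset$: under it, shared-variable intervals are monotonically non-decreasing along every abstract path. Now consider a cycle $C_e = \zconf^e_i, \ar_i, \ldots, \ar_j, \zconf^e_i$ in $errGraph$. Because $C_e$ returns to the same $(0,1)$-configuration, it returns in particular to the same interval vector $\avg$; combined with monotonicity this forces every intermediate configuration on $C_e$ to carry exactly the same $\avg$. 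Thus unfolding $C_e$ any number of times preserves the abstract shared-variable valuation at every position.

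The final step is to transfer this to the concrete SMT check $Concretize(\zpi) \land Reach(\zpi)$. Since no guard thresholds are crossed along $C_e$ and no variable is reset or decreased, one can reuse additional processes (or re-traverse the cycle with the same processes, as in the proof of Lemma~\ref{lem:zcsacs}) to realize any unfolding of $\zpi$ in the concrete whenever the simple (unfold-free) path is concretizable; conversely, satisfiability of an unfolding trivially implies satisfiability of the simple version. Hence $\textproc{CheckforNonSpuriousCEs}$ only needs to inspect the finitely many simple paths through $errGraph$, which is a finite graph. Combining (i) and (ii) yields termination of Algorithm~\ref{alg:ParamMC}. As in the coverability case, it is worth observing that if $errGraph$ happens to be acyclic (in particular if it is empty) termination follows even without the restriction. \qed
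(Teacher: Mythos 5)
Your overall strategy matches the paper's: the paper likewise reduces termination to the claim that, when all updates are non-negative and there are no resets, it suffices to check each path of $errGraph$ without unfolding its cycles, because every configuration on a cycle must carry the same interval vector $\avg$. Your observation that the outer fixed-point loop terminates unconditionally, simply because $\zConfs = \Bool^{|L|} \times \mD^{|\absShared|}$ is finite, is correct and is in fact a simplification over the coverability case, where a wqo argument is needed.

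The gap is in your final transfer step. You assert that any unfolding of a cycle can be realized concretely whenever the simple (unfold-free) path is concretizable, and that "satisfiability of an unfolding trivially implies satisfiability of the simple version"; neither direction is justified, and this is precisely the point where the paper's own argument makes a case distinction that you omit. If a cycle carries a rule with an \emph{upper} guard, then although $\avg$ is constant along the cycle, a concretely incremented variable must remain strictly below the guard's threshold, so the number of concrete increments --- and hence the number of realizable interleaved traversals --- is bounded; meanwhile the same processes may traverse the cycle an unbounded number of times. The block semantics $r_i^{c_i}$ used in $Concretize$ for the simple path does not capture the interleavings produced by unfolding, so checking only the simple path is not obviously equivalent to checking all unfoldings in this case. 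The paper handles this residual case by switching to a (bounded) reachability formula in the style of~\cite{esparza2020TAcomplexity}, restricted to the transitions occurring on the cycle. Your proof needs an analogous case split (or an argument ruling out upper guards on cycles); as stated, the hypotheses of the theorem restrict only updates and resets, not guards, so the claim that finitely many simple paths suffice does not follow.
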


The proof is very similar to Theorem \ref{th:ParamCC-Term}, and thus has been omitted. Similarly, in this scenario, the algorithm terminates if $\csa \models L_{spec}$ or if $errGraph$ is acyclic, empty, or all its cycles have neither decrements nor resets.

\section{Checking Coverability in $\zcsa$}\label{apx:zcscover}

Let $\zord \; \subseteq \zConfs \times \zConfs$ be the binary relation defined by:
$$(\vkz,\avg) \zord (\vkz{'},\avg') \iff \vkz \leq \vkz{'} \land \avg = \avg'$$
where $\leq$ is the component-wise ordering of vectors.
Then, we say that a $(0,1)$-configuration $\zconf$ satisfies a coverability specification $L_{spec}$~(see Definition \ref{def:coverability}), denoted $\zconf \models L_{spec}$, if  for all $i \in L_{>0}$, $\zconf.\vkz[i]>0$.
Thus, out of a coverability specification $L'_{spec} = L_{>0}$, we can generate a reachability specification $L_{spec} = L_{>0} \cup L_{=0}$ and use Algorithm \ref{alg:ParamMC} to check it.
Alternatively, a variant of Algorithm \ref{alg:ParamCC} can be used, in which the $\textproc{ComputePredBasis}$ procedure uses $\zord$ in place of the previously used $\ord$.

\end{document}